\documentclass[11pt,a4paper]{article}
\usepackage[T1]{fontenc}
\usepackage{lmodern}
\usepackage[margin=24mm]{geometry}
\usepackage{amsmath,amsthm,amsfonts,amssymb,mathtools,booktabs,microtype}
\usepackage{tikz}
\usetikzlibrary{shapes.geometric}
\usepackage{enumitem}
\usepackage{hyperref}
\hypersetup{colorlinks=true,allcolors=black}
\setlist{itemsep=3pt,topsep=5pt}
\allowdisplaybreaks[2]
\numberwithin{equation}{section}
\newtheorem{theorem}{Theorem}[section]
\newtheorem{lemma}[theorem]{Lemma}
\newtheorem{proposition}[theorem]{Proposition}
\newtheorem{corollary}[theorem]{Corollary}
\theoremstyle{remark}
\newtheorem{remark}[theorem]{Remark}
\newcommand{\funding}[1]{#1}
\newcommand{\email}[1]{\href{mailto:#1}{\nolinkurl{#1}}}
\newenvironment{keywords}{\par\smallskip\noindent\textbf{Keywords. }}{\par\smallskip}
\newenvironment{MSCcodes}{\par\smallskip\noindent\textbf{MSC 2020. }}{\par\medskip}
\newcommand{\NN}{\mathbb Z_{\geq0}}
\newcommand{\KA}{\operatorname{KA}}
\newcommand{\CNA}{\operatorname{CNA}}
\newcommand{\KK}{\mathcal K}
\newcommand{\card}[1]{\lvert#1\rvert}
\newcommand{\ind}[1]{\mathbf1_{\{#1\}}}
\newcommand{\VerificationURL}{https://raw.githubusercontent.com/Yan-ll9/kloeve-optimality/0e0266c8baba3f583a8f8dedd427e33ed24bcf00/kloeve_verification_v3.zip}
\hypersetup{pdfauthor={Lilin Yan and Hongwei Zhao},
  pdftitle={Optimality of Klove Arrays within the Symmetric Klove--Mossige Class}}
\title{Optimality of Kl\o ve Arrays within the Symmetric Kl\o ve--Mossige Class\thanks{
\funding{This work was supported by the Key Research and Development
Program of Shaanxi Province, China, under grant 2026CY-YBXM-044.}}}
\author{Lilin Yan\thanks{School of Electronics and Information,
Northwestern Polytechnical University, Xi'an 710072, China
(\email{yanlilin@mail.nwpu.edu.cn}).}
\and Hongwei Zhao\thanks{Corresponding author. School of Electronics and
Information, Northwestern Polytechnical University, Xi'an 710072, China
(\href{mailto:hongvi_zhao@126.com}{\nolinkurl{hongvi_zhao@126.com}}).}}
\date{}
\begin{document}
\maketitle
\begin{abstract}
Rajam\"aki and Koivunen asked whether minimum-redundancy symmetric
Kl\o ve--Mossige arrays with contiguous sum co-arrays are always Kl\o ve
arrays. We give a computer-assisted proof that, at every fixed sensor
count, every maximizing sensor set belongs to the Kl\o ve class. The
argument classifies the overlaps between a generator and its shifted
reflection, then bounds the aperture of a hypothetical optimizer outside
the Kl\o ve class. Comparing these bounds with a classical Kl\o ve
construction settles all sensor counts at least 330. An exact integer
certificate covers the remaining counts from 2 through 329 and matches
every equality case to a Kl\o ve array as a complete set. Consequently,
optimization within the full symmetric Kl\o ve--Mossige class reduces to
the previously known search over Kl\o ve parameters. The theorem concerns
this specified class, rather than unrestricted sparse arrays or all
restricted additive bases.
\end{abstract}
\begin{keywords}
restricted additive basis, extremal set, sparse array, sum co-array,
computer-assisted proof
\end{keywords}
\begin{MSCcodes}
11B13, 05B10, 05-04
\end{MSCcodes}

\section{Problem and precise statement}
An array is a finite set $D\subset\NN$ with $0\in D$. Its aperture is
$L=\max D$, and its sum co-array is $D+D=\{d+e:d,e\in D\}$.
For a contiguous sum co-array, $D+D=[0,2L]$, where intervals throughout
this paper contain integers only. At a fixed sensor count $N=\card D$,
minimizing the redundancy $N(N+1)/(2(2L+1))$ is equivalent to maximizing $L$.
Rajam\"aki and Koivunen~\cite[Section V-B2]{RK2021} asked whether the
optimizers in the symmetric Kl\o ve--Mossige (S-KMA) class always belong
to its Kl\o ve-array (KA) subfamily. We prove that they do: at every
fixed sensor count, \emph{every} maximizing sensor set is a KA.
This identifies the optimizing sets, rather than only showing that a
KA attains the optimal value.

An S-KMA is formed from a generator and its shifted reflection.
Increasing the shift increases the aperture, but can change the number
of common points and hence the sensor count. Such an improvement need
not be a competitor in the fixed-$N$ problem. Our argument controls
this overlap and compares arrays at the same sensor count. The precise
definitions and theorem follow below.

\subsection{Relation to additive bases and prior constructions}
Adjoining zero to a positive additive 2-basis makes its usual range equal
to the endpoint of the initial interval in its sumset. Thus a set with
$N$ sensors and sumset $[0,2L]$ is a restricted additive 2-basis of length
$N-1$ in the convention of Kohonen~\cite{Kohonen2014}. Restricted bases
need not be symmetric; the symmetric S-KMA sets form a particular
parametric subclass. Kohonen's meet-in-the-middle search~\cite{Kohonen2014}
and its early-pruning refinement~\cite{Kohonen2015} restrict admissible
prefixes to find extremal bases in the larger class, through length 47
in the latter work. Our finite calculation instead enumerates prescribed
generators and shifts to verify the bounded remainder of an infinite,
class-specific theorem.

Kl\o ve~\cite{Klove1980} constructed symmetric 2-bases with asymptotic
range coefficient $6/23$. His nine-block construction and parity rounding
also give the quantitative lower bound used here; we spell out the change
of variables before Lemma~\ref{lem:lower}. Rajam\"aki and
Koivunen~\cite{RK2020} introduced this classical construction into array
processing as the KA and studied its redundancy and a constant-unit-spacing
subfamily. Their subsequent work~\cite[Section V-B]{RK2021} treats the
larger class obtained by symmetrizing a Kl\o ve--Mossige generator and
traces that generator to earlier additive-basis work. We take its explicit
generator definition as our starting point. Rajam\"aki and
Koivunen~\cite[Sections V-B2--V-B3]{RK2021} formulated the S-KMA
optimality question, analyzed the KA subfamily, and gave a search using
$O(N\log N)$ aperture evaluations within that subfamily. These
constructions, the lower-bound method, and the KA search are prior work.
The contribution of the present proof is the elimination of all
non-KA optimizing sets, including exceptional parameter regimes, together
with an exhaustive equality-set certificate for the finite remainder.

\subsection{Definitions and main theorem}
Here are explicit definitions, including the zero-parameter cases.
Let $x,y,z\in\NN$, with $x+y\geq1$. For $y\geq1$, put
\[
\begin{aligned}
 A=\CNA(x,y)&=A_1\cup S\cup A_2,\\
 A_1&=[0,x-1],\\
 S&=\{x+j(x+1):0\leq j<y\},\\
 A_2&=[y(x+1),y(x+1)+x-1].
\end{aligned}
\]
Empty intervals contribute no elements. For $y=0$, define $A=[0,x-1]$.
Write
\begin{align}
 M&=\max A, &a&=\card A, &Q&=\{ux:0\leq u\leq x\},\label{eq:base}\\
 P&=x^2+M+1, &T_z&=\bigcup_{i=0}^{z-1}(iP+Q), &G&=A\cup(2M+1+T_z).\label{eq:generator}
\end{align}
Thus $T_0=\varnothing$ and $Q=\{0\}$ when $x=0$.
With $m=\max G$, the S-KMA associated with a shift $\lambda\in\NN$ is
\begin{equation}\label{eq:symmetrization}
 D_{x,y,z}(\lambda)=G\cup(m+\lambda-G),\qquad L=m+\lambda.
\end{equation}
These are Definitions 11 and 12 of \cite{RK2021}, expressed as sets.
Define, provisionally also for $y=0$,
\begin{equation}\label{eq:ka}
 \KA(x,y,z)=A\cup(2M+1+T_z)\cup(A+2M+1+zP).
\end{equation}
The class used for the conclusion is
$\KK=\{\KA(x,y,z):x,z\in\NN,\ y\geq1\}$.
Let $L^*(N)$ be the maximum feasible S-KMA aperture and let
\[
 \ell(N)=\max\{\max K:K\in\KK,\ \card K=N\}.
\]
The constructions below show that both sets of competitors are nonempty
for every $N\geq2$. Their maxima exist: $G\subseteq D$ bounds all generator
parameters at a fixed $N$, and feasibility bounds $\lambda$ by $2m+1$.

\begin{theorem}\label{thm:main}
For every integer $N\geq2$, every feasible S-KMA of maximum aperture
with $N$ sensors belongs to $\KK$ as a sensor set. Consequently
$L^*(N)=\ell(N)$ for every $N\geq2$.
\end{theorem}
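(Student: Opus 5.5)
Throughout, write $D=D_{x,y,z}(\lambda)$ for a feasible S-KMA and $L=m+\lambda$ for its aperture. The plan follows the abstract: classify how $G$ and its shifted reflection $m+\lambda-G$ overlap, bound the aperture of any non-KA feasible configuration, compare that bound with the KA lower bound of Lemma~\ref{lem:lower}, and finish the finite range by an exact certificate.

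\emph{Step 1: the KA shift.} First I would verify that the KA arises as the S-KMA with a specific shift $\lambda_0$. Since $G$ already contains $A\cup(2M+1+T_z)$, the reflection $m+\lambda_0-G$ must reproduce $A+2M+1+zP$ together with the reflected middle blocks. The point to check is that the reflection of $T_z$ lands inside $T_z$, that is, $Q$ and the period $P$ are reflection-symmetric, and this fixes $\lambda_0$ explicitly. Write $N_0=\card{\KA(x,y,z)}$ and $L_0=\max\KA(x,y,z)$.

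\emph{Step 2: classify $\lambda$ and bound non-KA apertures.} For arbitrary $\lambda$, write $\card D=2\card G-\card{G\cap(m+\lambda-G)}$. The number of sensors is $\card G=a+z(x+1)$ up to boundary coincidences, and the overlap is governed by $\lambda$. I would split into three regimes:
\begin{enumerate}
\item[(i)] $\lambda<\lambda_0$, where the reflection overlaps $G$ heavily;
\item[(ii)] $\lambda=\lambda_0$, which gives the KA;
\item[(iii)] $\lambda>\lambda_0$, where the overlap is small or empty.
\end{enumerate}
In regime (iii) I would show that contiguity of $D+D$ forces $\lambda\le\lambda_0+c$ for a small constant $c$. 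The reason is that the cross sums $G+(m+\lambda-G)$ must cover the middle of $[0,2L]$, and the largest gap of $G-G$ near $m$ is controlled by $x$ and $P$. The gain in aperture is then at most $c$, but it costs extra sensors because of lost overlap. So such an array has $N$ exceeding the sensor count of some KA with comparable aperture, and it should lose to a suitable KA at that same $N$. Regime (i) is dominated more directly: the aperture is smaller while the sensor count is similar. The degenerate parameter cases $x=0$, $y=0$ and $z=0$ would be handled separately, since there $A$ or $T_z$ collapses.

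\emph{Step 3: asymptotic comparison.} From Step 2 I would extract an explicit upper bound of the form $L\le \tfrac{6}{23}N^2-\alpha N$ for every feasible non-KA configuration, or more likely a parameter-wise bound that is then optimized over $(x,y,z)$. Lemma~\ref{lem:lower} gives $\ell(N)\ge\tfrac{6}{23}N^2-\beta N-\gamma$. Whenever $\alpha>\beta$ by enough, non-KA sets cannot be optimal. The paper indicates this threshold is $N\ge330$, which I take as the target.

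\emph{Step 4: finite certificate and equality cases.} For $2\le N\le329$, I would enumerate all $(x,y,z)$ with $\card G\le N$ and all $\lambda\le 2m+1$. For each, test contiguity with exact integer arithmetic, record the maximum $L$, and check that every maximizer coincides as a set with some $\KA(x',y',z')$, $y'\ge1$. This matching is necessary because a non-KA parameter choice can produce the same set as a KA. Combining the two ranges gives both the set-level claim and $L^*(N)=\ell(N)$, since every KA is itself a feasible S-KMA by Step 1.

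\emph{Main obstacle.} The hardest part is Step 2, regime (iii). Near-KA shifts can trade one sensor for a small aperture gain, so the fixed-$N$ comparison needs tight bookkeeping of both the overlap count and the contiguity gap. I would attack it by examining the sums around $L$ explicitly, using the arithmetic-progression structure of $S$ and $Q$.
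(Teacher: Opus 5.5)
Your overall architecture matches the paper's: classify the overlap, bound a non-KA optimizer, compare with Kl\o ve's lower bound, and finish with an exact finite certificate that matches sets. Steps 1 and 4 are essentially what the paper does. The gap is in Step 2, which carries all the content, and it rests on a false premise.

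\textbf{Regime (iii) is not short.} Contiguity does not confine $\lambda$ near $\lambda_0=2M+1$. Since $G-G=[-m,m]$ has no gaps (Lemma~\ref{lem:generator}), the cross term $L+(G-G)=[\lambda,L+m]$ always covers the middle of $[0,2L]$. So feasibility is exactly $[0,\lambda-1]\subseteq G+G$, i.e.\ $\lambda\leq H$, and generically $H=m+M+1=2M+1+zP$. Above $\lambda_0$ there is therefore a range of length $zP$, containing:
\begin{itemize}
\item the further KA shifts $2M+1+kP$, $1\leq k\leq z$, which give $\KA(x,y,z+k)$;
\item zero-overlap shifts with $N=2\card G$;
\item positive-overlap shifts whose aperture exceeds that of some $\KA(x,y,t)$ by $ux$ with $1\leq u\leq x$, i.e.\ by up to $x^2$, not a bounded constant, at a sensor cost that must be tracked exactly;
\item for $y=1$, $x\geq2$, the endpoint $H=m+M+2$. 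It has exactly the sensor count of $\KA(x,1,2z)$ and aperture one larger, a gain at no sensor cost.
\end{itemize}

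\textbf{Regime (i) cannot be dismissed by direct domination either.} For $M<\lambda<2M+1$ the middle blocks stop coinciding, so $N>N_0$ while $L<L_0$. That proves nothing unless some KA has exactly $N$ sensors and aperture at least $L$. For $M-x^2<\lambda\leq M$ the aperture is $L_K(z-1)+\tau$ with $\tau=\lambda-(M-x^2)>0$. It therefore exceeds that of $\KA(x,y,z-1)$, and $\tau$ must be weighed against the extra sensors.

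\textbf{The missing idea is how to compare at exactly the same $N$.} A non-KA array's natural KA neighbours have different sensor counts. A KA with the same $x$ and exactly $N$ sensors need not exist: for odd $x$ every $N_K$ is even. The exceptional endpoint also shows that domination by the KA sharing the array's parameters is simply false.

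The paper uses the fixed-$x$ real relaxation $U_x(V)$, which dominates every $L_K$ because $U_x(V)-L_K=2\bigl(y-(V-3x+3)/4\bigr)^2$. It combines this with
\[
8\bigl(U_x(N-1)-U_x(N-s)\bigr)=(s-1)(2N-s+2x+5)
\]
and exact overlap counts to show that $s\geq1$ extra sensors pay for the aperture gain up to an additive $x$. Zero-overlap shifts below $H$ are discarded first, since the feasible array $D(H)$ has the same $2\card G$ sensors and larger aperture. The result is that every non-KA \emph{optimizer} satisfies $L\leq U_x(N-1)+x\leq F(N)=(3N^2+13N-30)/23$, or $L\leq(N+3)^2/8-2$ for degenerate generators.

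Only then can Step~3 run. Lemma~\ref{lem:lower} gives $\ell(N)\geq(3N^2+18N-19)/23-72$, and the strict gap $13/23<18/23$ in linear coefficients yields the threshold $330$. Your Step~3 has no such bound to compare against, so neither the threshold nor the infinite range follows. Note also that the aperture's quadratic coefficient is $3/23$; Kl\o ve's $6/23$ refers to the range $2L$.
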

The set formulation matters. Different S-KMA parameter tuples may
represent the same array, so the theorem does not assert that every
optimal tuple uses a particular canonical shift. Figure~\ref{fig:set-equivalence}
illustrates this distinction with an optimal 12-sensor set. We first prove bounds
that settle all $N\geq330$, and then give an exhaustive finite certificate.

\begin{remark}[Scope and nonuniqueness]\label{rem:scope}
The class restriction is essential even among symmetric arrays. The
length-nine basis in \cite[Table 2]{Kohonen2014},
\[
 B=\{0,1,3,4,9,11,16,17,19,20\},
\]
satisfies $B=20-B$ and $B+B=[0,40]$, whereas $\ell(10)=19$.
The latter value follows by checking the nine triples with $y\geq1$
and $4x+2y+z(x+1)=10$ using \eqref{eq:NK}--\eqref{eq:LK}.
Thus the theorem does not extend to all symmetric restricted bases.

Within S-KMA, the maximizing set itself need not be unique. For $N=7$,
the two distinct sets
\begin{align*}
 \KA(0,3,1)&=\{0,1,2,5,8,9,10\},\\
 \KA(0,2,3)&=\{0,1,3,5,7,9,10\}
\end{align*}
both have sumset $[0,20]$ and attain the certified optimum $\ell(7)=10$
(Proposition~\ref{prop:finite}). This is different from the parameter
nonuniqueness in Figure~\ref{fig:set-equivalence}: several representations
may describe one set, and several different sets may be optimal.
\end{remark}

\begin{figure}[htbp]
\centering
\begin{tikzpicture}[x=0.13in,y=0.39in,line width=1.05pt,
 base/.style={circle,fill=black,inner sep=1.65pt},
 middle/.style={rectangle,fill=black,inner sep=1.55pt},
 rightbase/.style={regular polygon,regular polygon sides=3,fill=black,inner sep=1.5pt}]
\foreach \r in {0,1,2,3} {
 \draw[gray!60] (0,\r) -- (27,\r);
 \foreach \i in {0,...,27} \draw[gray!60] (\i,\r-0.04)--(\i,\r+0.04);
}
\node[anchor=east,font=\small] at (-1,3) {$G$};
\node[anchor=east,font=\small] at (-1,2) {$27-G$};
\node[anchor=east,font=\small] at (-1,1) {$D$};
\node[anchor=east,font=\small] at (-1,0) {$\KA(1,3,1)$};
\foreach \i in {0,1,3,5,6,13,14,21,22} \node[base] at (\i,3) {};
\foreach \i in {5,6,13,14,21,22,24,26,27} \node[base] at (\i,2) {};
\foreach \i in {0,1,3,5,6,13,14,21,22,24,26,27} \node[base] at (\i,1) {};
\foreach \i in {5,6,13,14,21,22} \draw[black] (\i,1) circle[radius=3.1pt];
\foreach \i in {0,1,3,5,6} \node[base] at (\i,0) {};
\foreach \i in {13,14} \node[middle] at (\i,0) {};
\foreach \i in {21,22,24,26,27} \node[rightbase] at (\i,0) {};
\foreach \i in {0,3,6,9,12,15,18,21,24,27}
 \node[below,font=\scriptsize] at (\i,-0.08) {\i};
\node[anchor=west,font=\scriptsize] at (28,3) {$9$ points};
\node[anchor=west,font=\scriptsize] at (28,2) {$9$ points};
\node[anchor=west,font=\scriptsize] at (28,1) {$12$ points};
\node[anchor=west,font=\scriptsize] at (28,0) {$12$ points};
\end{tikzpicture}
\caption{Set equality despite different parameters. For $(x,y,z,\lambda)
=(1,3,2,5)$, the generator has nine points and its reflection overlaps
it at six points (rings). Their union is the 12-sensor set $\KA(1,3,1)$,
with aperture 27. In the last row, circles mark the left base, squares
the middle block, and triangles the translated right base. All positions
are exact integers.}\label{fig:set-equivalence}
\end{figure}

\subsection{Proof strategy and computational scope}
The proof separates structural identities, exclusion bounds and a finite
verification. This separation specifies the logical role of the computation.
\begin{enumerate}[label=(\roman*)]
\item We compute the generator's difference set and first missing sum.
These determine every feasible shift. Intersections with the shifted
reflection give the exact sensor count, including degenerate parameters.
\item We classify overlaps and bound the aperture of a hypothetical
maximizer outside $\KK$. Some cases are already KA sets; others are
strictly dominated at the same sensor count. The remaining cases satisfy
one of the two upper bounds in Proposition~\ref{prop:upper}.
\item A classical nine-block KA gives a uniform lower bound. For
$N\geq330$ it strictly exceeds both exclusion bounds, producing a
contradiction. This is the infinite part of the argument.
\item For $2\leq N\leq329$, bounded integer enumeration checks every
feasible shift and every equality set, giving Proposition~\ref{prop:finite}.
\end{enumerate}
The decisive comparison is in the linear term. The main exclusion bound
and the classical construction have the same quadratic coefficient,
but their difference, including the uniform rounding loss, is
\[
 \left(\frac{3N^2+18N-19}{23}-72\right)
 -\frac{3N^2+13N-30}{23}=\frac{5N-1645}{23}.
\]
It is positive for $N\geq330$. The other exclusion bound has the smaller
quadratic coefficient $1/8$ and is also exceeded in this range. Hence
only finitely many sensor counts remain. These exclusion bounds apply
to a hypothetical non-KA \emph{optimizer}, after domination alternatives
have been removed; they are not asserted for every non-KA array.

\section{Generator identities and feasible shifts}
\begin{lemma}\label{lem:cna}
For $y\geq1$,
$a=2x+y$ and $M=(x+1)(y+1)-2$; for $y=0$, $a=x$ and $M=x-1$.
In both cases
\[
 M-A=A,\qquad A+A=[0,2M],\qquad A+Q=[0,M+x^2].
\]
\end{lemma}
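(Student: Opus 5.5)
The plan is direct computation from the definition of $\CNA(x,y)$, treating $y=0$ separately. For $y=0$ everything is immediate. We have $A=[0,x-1]$, so $a=x$ and $M=x-1$. The interval is symmetric about $M/2$, giving $M-A=A$, and $A+A=[0,2x-2]=[0,2M]$. For $A+Q$, note that $Q=\{0,x,\dots,x^2\}$ has gaps exactly $x$, the length of $A$. Hence the translates $ux+[0,x-1]$ tile $[0,x^2+x-1]=[0,M+x^2]$. The case $x=0$ is excluded, since $x+y\geq1$.

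For $y\geq1$, I will first count and locate the pieces.

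\textbf{Size and maximum.} $A_1$ has $x$ points and $S$ has $y$ points, the last being $x+(y-1)(x+1)=y(x+1)-1$. $A_2$ starts at $y(x+1)$ and has $x$ points. The three pieces are disjoint and appear in increasing order, since $S$ starts at $x>x-1$ and ends just before $A_2$. Hence $a=2x+y$ and $M=y(x+1)+x-1=(x+1)(y+1)-2$.

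\textbf{Symmetry.} The map $t\mapsto M-t$ sends $A_1$ onto $A_2$, because $M-[0,x-1]=[M-x+1,M]=[y(x+1),M]$. It sends $S$ to itself: $M-(x+j(x+1))=x+(y-1-j)(x+1)$.

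\textbf{Sums.} The key structural observation is that $A_1\cup S$ is exactly the set of $t\in[0,y(x+1)-1]$ with $t\bmod(x+1)\neq x$ or $t<x$. In other words, between consecutive points of $S$ there is a gap of length $x$, which is filled by $A_1+S$. Concretely, $A_1+(x+j(x+1))=[x+j(x+1),\,2x-1+j(x+1)]$, and together with $S$ and $A_1$ itself these cover $[0,y(x+1)+x-1]=[0,M]$. Indeed $A_1\cup(A_1+S)\cup S\supseteq[0,M]$: the block $A_1+x+j(x+1)$ covers residues $x,\dots,2x-1$ starting at $x+j(x+1)$, leaving only the positions $(j+1)(x+1)+x$, which are in $S$ or equal to points covered later. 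I would write this as an explicit interval decomposition. Given $[0,M]\subseteq A+A$, symmetry yields $[M,2M]=2M-[0,M]\subseteq 2M-(A+A)=A+A$. Thus $A+A=[0,2M]$, and the reverse inclusion is trivial.

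\textbf{$A+Q$.} Here $A\supseteq A_1=[0,x-1]$, and the translates $ux+A_1$ tile $[0,x^2+x-1]$. It remains to cover $(x^2+x-1,\,M+x^2]$. Using $A_2$, the translate $x^2+A_2=[x^2+y(x+1),\,x^2+M]$ covers the top. The middle range $[x^2+x,\,x^2+y(x+1)-1]$ is $x^2+[x,y(x+1)-1]$, so I need $[x,y(x+1)-1]\subseteq A+Q$. I expect this to be the main obstacle, in the sense of bookkeeping. I would handle it by using $S+Q$ together with $A_1+Q$ and $A_2+Q$ and a residue argument modulo $x+1$. Note that $Q\ni ux\equiv -u\pmod{x+1}$, so $S+Q$ hits residues $x-u$, giving all residues with $u\le x$. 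More simply, for $t$ in the middle range, write $t=s+ux$ with $s\in A$ chosen greedily: take the largest $ux\le t$ with $u\le x$ and check that $t-ux\in A$. If not, $t-ux$ lies in a gap of $A_1\cup S$, namely is $\equiv x\pmod{x+1}$-free, and I would decrease $u$ by one. Verifying that one of two consecutive choices of $u$ always lands in $A$ finishes the proof. The degenerate case $x=0$, where $Q=\{0\}$ and $A=S=[0,y-1]$, gives $A+Q=A=[0,M]$ directly.
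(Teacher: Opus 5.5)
Your $y=0$ case, the size and maximum, and the symmetry $M-A=A$ are fine. The overall plan is also the paper's: cover $[0,M]$ and reflect, then cover $A+Q$ by $A_1+Q$, $A_2$-translates and $S+Q$. Both concrete covering steps, however, fail as written.

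Put $s_j=x+j(x+1)$. In the $A+A$ step, the block $s_j+A_1=[s_j,s_j+x-1]$ leaves the point $s_j+x=s_{j+1}-1$ uncovered, not $s_{j+1}$. Its residue modulo $x+1$ is $x-1$, not $x$. For $x\geq1$ and $0\leq j\leq y-2$, the point $2x+j(x+1)$ lies in none of $A_1$, $S$, $A_1+S$, $A_1+A_1$, $A_2$. So the claimed inclusion $A_1\cup(A_1+S)\cup S\supseteq[0,M]$ is false. Already for $x=1$, $y=2$ one has $A=\{0,1,3,4\}$ and $2\notin\{0,1,3\}$. (Your description of $A_1\cup S$ also has $\neq$ where $=$ is meant.) The missing observation is that $s_0=x$ itself lies in $A$. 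Hence $[0,x]\subseteq A$ and $s_j+[0,x]=[s_j,s_{j+1}-1]$, with $s_{y-1}+[0,x]=[s_{y-1},M]$. This is the paper's representation $s=(q(x+1)-1)+(r+1)$. For $r=x-1$ it is a genuine $S+S$ sum, which your cover never uses.

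In the $A+Q$ step, the reduction to $[x,y(x+1)-1]\subseteq A+Q$ is a non sequitur. You need $x^2+[x,y(x+1)-1]\subseteq A+Q$, and membership in $A+Q$ is not preserved by adding $x^2$. More seriously, the greedy claim that one of two consecutive values of $u$ always works is false. For $x=3$, $y=4$ one has $A=\{0,1,2,3,7,11,15,16,17,18\}$ and $Q=\{0,3,6,9\}$. For $t=14$, the values $u=3$ and $u=2$ give $5,8\notin A$, and only $u=1$ works, since $14=11+3$.

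In that region only points of $S$ are available. Since $ux\equiv-u\pmod{x+1}$, the residue of $t$ forces $u$, which can be anything in $[0,x]$. Your residue remark is the right tool, but it needs a range check, as in the paper:
\begin{itemize}
\item for $t=q(x+1)+r\in[x^2+x,M-x]$ one has $x\leq q\leq y-1$;
\item take $u=x-r$ if $r\geq1$ and $u=x$ if $r=0$;
\item then $t-ux=s_j$ with $j=q-u\in[0,y-1]$.
\end{itemize}
The part of your middle range above $M-x$ is covered by $A_2+Q=[M-x+1,M+x^2]$. This is an interval because $A_2$ consists of $x$ consecutive points. The single translate $x^2+A_2$ does not suffice.
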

\begin{proof}
For $y\geq1$, the three displayed blocks of $A$ are disjoint and ordered;
reflection in $M/2$ interchanges $A_1,A_2$ and reverses $S$.
If $x=0$, then $A=[0,y-1]$ and the assertions are immediate.
Assume $x\geq1$. To represent $s\in[0,M]$ as a sum, write
$s=q(x+1)+r$ with $0\leq r\leq x$.
For $q=0$ or $q=y$, $s\in A$. For $1\leq q\leq y-1$, if $r=x$,
again $s\in S$. Otherwise
$q(x+1)-1\in S$ and $r+1\in A_1\cup\{x\}\subseteq A$ give the sum.
Reflection then supplies $[M,2M]$.

For the last identity, $A_1+Q=[0,x^2+x-1]$ and
$A_2+Q=[M-x+1,M+x^2]$. If these intervals do not meet or abut, take
$s\in[x^2+x,M-x]$ and write $s=q(x+1)+r$, $0\leq r\leq x$.
Then $x\leq q\leq y-1$. If $r\geq1$, set $t=x-r$, $j=q-t$;
then $0\leq t<x$, $0\leq j<y$ and $s=(x+j(x+1))+tx\in S+Q$.
If $r=0$, take $j=q-x$ and $t=x$ instead.
For $y=0$, all three identities follow from
$A=[0,x-1]$ and the abutting intervals $ux+A$, $0\leq u\leq x$.
\end{proof}

\begin{lemma}\label{lem:generator}
For $z\geq1$, the blocks of $T_z$ are disjoint and $T_z$ is symmetric.
Moreover, writing $B=2M+1+T_z$ and $n=\card G$,
\begin{align*}
 n&=a+z(x+1), &m&=M+zP,\\
 A+B&=[2M+1,m+M], &B-A&=[M+1,m],\\
 G-G&=[-m,m].
\end{align*}
For $z=0$, $n=a$, $m=M$ and the last identity still holds.
\end{lemma}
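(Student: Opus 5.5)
Since $T_z=\bigcup_{i<z}(iP+Q)$ with $Q=\{ux:0\le u\le x\}$, $\max Q=x^2$ and $P=x^2+M+1$, consecutive blocks $iP+Q$ and $(i+1)P+Q$ are separated by a gap $P-x^2=M+1\geq1$. Hence the blocks are disjoint and $\card{T_z}=z(x+1)$. The degenerate case $x=0$ gives $Q=\{0\}$, one point per block, and the same count. Each block $iP+Q$ is symmetric about $iP+x^2/2$. The map $t\mapsto (z-1)P+x^2-t$ sends $iP+Q$ to $(z-1-i)P+Q$, so $T_z$ is symmetric with $\max T_z=(z-1)P+x^2$. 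The set $B=2M+1+T_z$ starts at $2M+1>M=\max A$, so $A$ and $B$ are disjoint. This gives $n=a+z(x+1)$ and
\[
 m=2M+1+(z-1)P+x^2=M+(x^2+M+1)+(z-1)P=M+zP .
\]
For $z=0$ we have $G=A$, and the claims $n=a$, $m=M$ are immediate.

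Next I compute $A+B=2M+1+(A+T_z)$. By Lemma~\ref{lem:cna}, $A+(iP+Q)=iP+[0,M+x^2]=[iP,iP+P-1]$. These intervals tile $[0,zP-1]$ exactly, so $A+T_z=[0,zP-1]$ and
\[
 A+B=[2M+1,2M+zP]=[2M+1,m+M].
\]
Because $A=M-A$ (Lemma~\ref{lem:cna}), we have $B-A=B+A-M=[M+1,m]$.

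For $G-G=[-m,m]$, antisymmetry reduces the claim to showing that every $d\in[0,m]$ is a difference. For $d\in[0,M]$, use $A-A=A+A-M=[-M,M]$. For $d\in[M+1,m]$, use $B-A$. The $z=0$ case follows from $A-A=[-M,M]$ alone.

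The only step needing care is the exact tiling $A+T_z=[0,zP-1]$. Here the choice $P=x^2+M+1$ is precisely what makes consecutive translates of $[0,M+x^2]$ abut with neither gap nor overlap. Lemma~\ref{lem:cna} already supplies the hard part, the interval identity $A+Q=[0,M+x^2]$, so I expect no real obstacle beyond checking the edge cases $x=0$ and $y=0$. In both of those cases Lemma~\ref{lem:cna} still provides the three identities used above.
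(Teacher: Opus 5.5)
Your proof is correct and takes essentially the same route as the paper. The gap $P-x^2=M+1$ and the reflection $t\mapsto(z-1)P+x^2-t$ give disjointness and symmetry, Lemma~\ref{lem:cna} gives the tiling $A+T_z=[0,zP-1]$, and $-A=A-M$ together with $A-A=[-M,M]$ and $\pm(B-A)$ yields $G-G=[-m,m]$; your explicit derivations of $\card{T_z}=z(x+1)$ and $m=M+zP$ only spell out steps the paper leaves implicit.
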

\begin{proof}
The gap between the end of one block and the start of the next is
$P-x^2=M+1>0$. Reflection in $( (z-1)P+x^2)/2$ reverses the
blocks and the points within each block. Also $A\cap B=\varnothing$.
Lemma~\ref{lem:cna} gives
\[
 A+T_z=\bigcup_{i=0}^{z-1}[iP,(i+1)P-1]=[0,zP-1].
\]
This proves the formulas for $A+B$ and, using $-A=A-M$, for $B-A$.
Finally $A-A=[-M,M]$; combine this with $B-A$ and its negative.
\end{proof}

\begin{lemma}[First hole]\label{lem:hole}
Let $H=\min\{s\in\NN:s\notin G+G\}$. For $z=0$, $H=2M+1$.
For $z\geq1$,
\begin{equation}\label{eq:hole}
 H=\begin{cases}
 2m+1,&x+y=1,\\
 m+M+2,&y=1,\ x\geq2,\\
 m+M+1,&\text{otherwise}.
 \end{cases}
\end{equation}
\end{lemma}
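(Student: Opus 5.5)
The plan is to decompose $G+G=(A+A)\cup(A+B)\cup(B+B)$, with $B=2M+1+T_z$, and read off the initial interval from Lemmas~\ref{lem:cna} and~\ref{lem:generator}.

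\textbf{Covered range.} For $z=0$, $G=A$ and $A+A=[0,2M]$. Since $\max A=M$, the value $2M+1$ is not a sum, so $H=2M+1$. For $z\geq1$, $A+A=[0,2M]$ and $A+B=[2M+1,m+M]$ together give $[0,m+M]\subseteq G+G$. Hence $H\geq m+M+1$ in every case.

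\textbf{Which sums reach $m+M+1$.} Every element of $A+A$ or $A+B$ is at most $m+M$. So $m+M+1\in G+G$ exactly when $m+M+1\in B+B$, that is, when $m+M+1-2(2M+1)=zP-M-1$ lies in $T_z+T_z$. I will use $P=x^2+M+1$, so $zP-M-1=(z-1)P+x^2$. This is $\max T_z$.

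The sumset $T_z+T_z$ is the union of blocks $(i+j)P+(Q+Q)$, where $Q+Q=\{ux:0\le u\le 2x\}$. The target $(z-1)P+x^2$ lies in such a block when $i+j=z-1$ and $x^2\in Q+Q$. This always holds: take $u=x$.

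This seems to say the target is always reached, which contradicts the ``otherwise'' case $H=m+M+1$. I therefore expect the main obstacle to be this bookkeeping step, and I need to recheck the translation. $B+B$ is translated by $2(2M+1)$, and $m=M+zP$ gives $m+M+1-4M-2=zP-2M-1$, not $zP-M-1$. Redoing it, $zP-2M-1=(z-1)P+x^2-M$.

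So the real question is whether $(z-1)P+x^2-M$ lies in $T_z+T_z$. This value falls into the gap of length $M+1$ just below the block of index $z-1$ (shifted by $x^2$). It can only be represented as $(z-2)P+w$ with $w\in Q+Q$ and $w=P+x^2-M=2x^2+1$. But $w\leq 2x^2$ is forced, so there is no representation, unless blocks overlap. Blocks of $T_z+T_z$ have width $2x^2$ against spacing $P=x^2+M+1$, so they overlap exactly when $x^2>M+1$. This is also the degenerate situation, small $y$, where the exceptional cases live.

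\textbf{Case analysis.}
\begin{itemize}
\item $x+y=1$, i.e.\ $(x,y)=(0,1)$ or $(1,0)$. Here $A$ and $Q$ are tiny and $T_z$ becomes an arithmetic progression. I would verify directly that $G+G$ is the full interval $[0,2m]$, so $H=2m+1$.
\item $y=1$, $x\geq2$. Here $M=2x$, and I would check by hand that $m+M+1$ is represented, for instance via a sum of the form $S+B$ or a point of $A_2+B$ exceeding the previous bound. I would then check that $m+M+2$ is not represented.
\item Otherwise, I would show $m+M+1\notin G+G$. This means excluding each of $A+A$, $A+B$ and $B+B$, using the gap structure above together with $\max(A+B)=m+M$.
\end{itemize}

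The $y=1$ case, where $A_2$ sits adjacent to $S$ and the gap arithmetic changes, is where I expect the fiddly verification to be.
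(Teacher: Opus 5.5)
Your decomposition, the covered range $[0,m+M]\subseteq(A+A)\cup(A+B)$, and your corrected reduction ($m+M+1\in G+G$ iff $(z-1)P+x^2-M\in T_z+T_z$) agree with the paper. The gap is in the next step.

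The blocks of $T_z+T_z$ are $kP+\{vx:0\le v\le 2x\}$ for $0\le k\le 2z-2$. You test only $k=z-2$, which gives offset $2x^2+1$ and is correctly excluded, and then attribute any exception to overlapping blocks. You never test $k=z-1$. There the target is $(z-1)P+(x^2-M)$, and it is represented exactly when $x^2-M=vx$ with $0\le v\le 2x$. This, not overlap, is the source of the exceptional case.

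Overlap is the wrong criterion in both directions:
\begin{itemize}
\item For $(x,y,z)=(2,1,1)$ there is only one block. Yet $G=\{0,1,2,3,4,9,11,13\}$ has $m+M+1=18=9+9$ and first hole $19$, so your argument as written would give the wrong $H$.
\item For $y=0$, $x\ge2$, $z\ge2$ the blocks do overlap, since $2x^2\ge P=x^2+x$. Nevertheless the target is never hit.
\end{itemize}
Your suggested mechanism for $y=1$, $x\ge2$, a sum in $S+B$ or $A_2+B$, is impossible. Indeed $S,A_2\subseteq A$ and $\max(A+B)=m+M$, as you observed yourself.

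What is missing is an elimination of indices followed by a divisibility analysis. For $e\in\{0,1\}$, $m+M+1+e\in B+B$ iff $(z-k)P-2M-1+e=vx$ for some $0\le k\le 2z-2$ and $0\le v\le 2x$. For $M\ge1$ the indices other than $z-1$ fail:
\begin{itemize}
\item $k\ge z$ gives a negative left side;
\item $k=z-2$ gives $2x^2+1+e$;
\item $k\le z-3$ gives at least $3x^2+M+2$.
\end{itemize}
Only the condition $x^2-M+e=vx$ remains, and it splits into cases.
\begin{itemize}
\item For $x,y\ge1$ and $e=0$ it requires $x\mid M=x(y+1)+y-1$ and $M\le x^2$. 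If $y\ge2$, then $x\mid y-1$ forces $y\ge x+1$ and $M>x^2$, so there is no representation.
\item If $y=1$, then $M=2x$, and $v=x-2$ works exactly when $x\ge2$. For $e=1$ the value $(x-1)^2$ is not divisible by $x$, so $m+M+2$ is the hole.
\item For $y=0$, $x\ge2$ the value $x^2-x+1$ is not divisible by $x$.
\item For $x=0$, $y\ge2$ the value $-M$ is nonzero.
\end{itemize}
Once this step replaces your gap-and-overlap heuristic, the rest of your outline is the paper's proof.
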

\begin{proof}
The $z=0$ case follows from Lemma~\ref{lem:cna}. If $x+y=1$, $G$ is
an integer interval. Otherwise $M\geq1$, and
$[0,m+M]\subseteq(A+A)\cup(A+B)$.
For $e=0$, or for $e=1$ when needed, the next potential sum
$m+M+1+e$ belongs to $B+B$ exactly when
\begin{equation}\label{eq:holecondition}
 (z-d)P-2M-1+e=vx,
 \qquad 0\leq d\leq2z-2,\quad 0\leq v\leq2x.
\end{equation}
For $d\geq z$, the left side is negative. For $d\leq z-3$, it is
at least $3x^2+M+2>2x^2$. For $d=z-2$ it is $2x^2+1+e$.
The only possible index is $d=z-1$, giving $x^2-M+e=vx$.

Suppose first $x,y\geq1$ and $e=0$. This requires $x\mid M$ and
$M\leq x^2$. Since $M=x(y+1)+y-1$, if $y\geq2$ then divisibility
forces $y\geq x+1$, contradicting $M\leq x^2$.
If $y=1$, then $M=2x$ and the condition holds precisely when $x\geq2$,
with $v=x-2$. In this exceptional case, the $e=1$ expression
$(x-1)^2$ is not divisible by $x$, so the first missing sum is the next one.
For $y=0,x\geq2$, the $e=0$ expression is $x^2-x+1$, not divisible by $x$.
For $x=0,y\geq2$, it is $-M<0$. This proves every case.
\end{proof}

\begin{lemma}[Feasibility]\label{lem:feasibility}
The sum co-array of $D_{x,y,z}(\lambda)$ is contiguous if and only if
$0\leq\lambda\leq H$.
\end{lemma}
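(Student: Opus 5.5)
The plan is to decompose the sumset of $D=D_{x,y,z}(\lambda)=G\cup(L-G)$, with $L=m+\lambda$, into three pieces:
\[
 D+D=(G+G)\cup\bigl(L+(G-G)\bigr)\cup\bigl(2L-(G+G)\bigr).
\]
Each piece is controlled by a result already proved. By Lemma~\ref{lem:generator}, $G-G=[-m,m]$, so the middle piece is exactly the interval $[\lambda,2m+\lambda]$. By the definition of $H$ in Lemma~\ref{lem:hole}, $[0,H-1]\subseteq G+G$ and $H\notin G+G$. Reflecting through $L$ gives $[2L-H+1,2L]\subseteq 2L-(G+G)$. We also use $\max(G+G)=2m$ and $H\leq 2m+1$; the latter is visible in every case of \eqref{eq:hole}, since $M\leq m$, and in the case $z=0$.

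\emph{Sufficiency.} Assume $0\leq\lambda\leq H$. The interval $[0,H-1]$ meets or abuts $[\lambda,2m+\lambda]$ because $\lambda\leq H$. Hence $[0,2m+\lambda]\subseteq D+D$.

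For the upper end, the interval $[2L-H+1,2L]$ meets or abuts $[\lambda,2m+\lambda]$ exactly when $2L-H+1\leq 2m+\lambda+1$. Substituting $L=m+\lambda$, this is $2m+2\lambda-H\leq 2m+\lambda$, that is, $\lambda\leq H$. It therefore holds by assumption.

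Together the three intervals cover $[0,2L]$. Since $D\subseteq[0,L]$, the reverse inclusion $D+D\subseteq[0,2L]$ is automatic, so $D+D=[0,2L]$.

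\emph{Necessity.} Assume $\lambda>H$. We show that $H\notin D+D$, and that $H$ lies in $[0,2L]$, so the co-array is not contiguous. We check each of the three pieces in turn.
\begin{enumerate}[label=(\alph*)]
\item $H\notin G+G$, by the definition of $H$.
\item $H\notin L+(G-G)=[\lambda,2m+\lambda]$, because $H<\lambda$.
\item $H\in 2L-(G+G)$ would require $2L-H\leq\max(G+G)=2m$. This means $2\lambda\leq H$, which contradicts $\lambda>H\geq0$.
\end{enumerate}
Finally, $H<\lambda\leq L\leq 2L$, so $H$ is a genuine gap inside $[0,2L]$.

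The only step needing care is to confirm that the reflected piece cannot fill the hole at $H$, which is item (c), and to check the edge cases $z=0$ and $x+y=1$. In both edge cases the same interval bookkeeping applies, using $H=2M+1$ (respectively $H=2m+1$) from Lemma~\ref{lem:hole}. I expect no real obstacle beyond this careful handling of the interval endpoints.
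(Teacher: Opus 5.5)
Your proof is correct and follows the paper's argument. You use the same three-piece decomposition of $D+D$, with $G-G=[-m,m]$ filling $[\lambda,2m+\lambda]$, and the fact that below $\lambda$ only $G+G$ can contribute (your items (b) and (c), applied at the gap $H$); the paper replaces your upper-end bookkeeping with the symmetry of $D+D$ about $L$, and, as in your argument, only the definition of $H$ is needed rather than the case formula of Lemma~\ref{lem:hole}, since $H\leq 2m+1$ follows directly from $G+G\subseteq[0,2m]$.
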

\begin{proof}
Put $L=m+\lambda$. Directly from the definition,
\begin{equation}\label{eq:generic}
 D+D=(G+G)\cup\bigl(L+(G-G)\bigr)\cup\bigl(2L-(G+G)\bigr).
\end{equation}
The set is symmetric about $L$. Below $\lambda$, only $G+G$ can contribute,
because the other two terms begin at $\lambda$ and $2\lambda$ respectively.
The middle term covers $[\lambda,L]$ by Lemma~\ref{lem:generator}.
Thus $[0,L]\subseteq D+D$ precisely when $[0,\lambda-1]\subseteq G+G$.
\end{proof}

For the overlap calculations define
\[
 q(t)=\card{Q\cap(Q+t)}=
 \begin{cases}
 x+1-|t|/x,&x\geq1,\ x\mid t,\ |t|\leq x^2,\\
 \ind{t=0},&x=0,\\
 0,&\text{otherwise}.
 \end{cases}
\]
\begin{lemma}[Sensor count]\label{lem:count}
Let $\rho(\lambda)=\card{G\cap(m+\lambda-G)}$, so that
$N=2n-\rho(\lambda)$. For $z\geq1$,
\begin{align}
 \rho(\lambda)&=2r(\lambda)+
 \sum_{d=-(z-1)}^{z-1}(z-|d|)q(\lambda-2M-1+dP),\label{eq:rho}\\
 r(\lambda)&=\card{A\cap(\lambda+Q)}.\notag
\end{align}
In particular $r(\lambda)=0$ for $\lambda>M$.
For $z=0$, $\rho(\lambda)=\card{A\cap(M+\lambda-A)}$.
\end{lemma}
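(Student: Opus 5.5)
The plan is a direct inclusion--exclusion computation, followed by writing both $G$ and its shifted reflection in block form. Since $G$ and $m+\lambda-G$ each have $n$ elements, $N=\card{D}=2n-\card{G\cap(m+\lambda-G)}=2n-\rho(\lambda)$. This holds for every $z$. For $z=0$ we have $G=A$ and $m=M$ by Lemma~\ref{lem:generator}, so $\rho(\lambda)=\card{A\cap(M+\lambda-A)}$ as stated, and nothing more is needed.

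For $z\geq1$, I first identify the reflection blockwise. Put $B=2M+1+T_z$ and $c=(z-1)P+x^2$, so that $T_z=c-T_z$ by the symmetry in Lemma~\ref{lem:generator}. Then:
\begin{itemize}
\item $m-2M-1=M+zP-2M-1=(z-1)P+x^2=c$, using $P=x^2+M+1$;
\item hence $m-B=c-T_z=T_z$;
\item by Lemma~\ref{lem:cna}, $m-A=zP+(M-A)=zP+A$.
\end{itemize}
Therefore
\[
 m+\lambda-G=(\lambda+T_z)\cup(zP+\lambda+A),
\]
and $\rho(\lambda)$ is the sum of the four intersection sizes of $\{A,B\}$ with $\{\lambda+T_z,\ zP+\lambda+A\}$. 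These pieces are disjoint on each side, so the counts add.

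Next I evaluate the four intersections.
\begin{itemize}
\item $A\cap(zP+\lambda+A)=\varnothing$, because $zP\geq P>M$.
\item For $A\cap(\lambda+T_z)$, the blocks $\lambda+iP+Q$ with $i\geq1$ start above $P>M=\max A$. Only $i=0$ survives, giving $r(\lambda)=\card{A\cap(\lambda+Q)}$.
\item For $B\cap(zP+\lambda+A)$, the involution $t\mapsto m+\lambda-t$ maps $G\cap(m+\lambda-G)$ onto itself and sends $A$ to $zP+\lambda+A$ and $B$ to $\lambda+T_z$. It therefore exchanges $A\cap(\lambda+T_z)$ with $B\cap(zP+\lambda+A)$, so the latter also has $r(\lambda)$ elements. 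This produces the term $2r(\lambda)$.
\item For $B\cap(\lambda+T_z)$, write $B=\bigcup_i(2M+1+iP+Q)$ and $\lambda+T_z=\bigcup_j(\lambda+jP+Q)$, with $0\leq i,j\leq z-1$. The blocks on each side are pairwise disjoint, since the gaps $M+1$ are positive by Lemma~\ref{lem:generator}. Hence
\[
 \card{B\cap(\lambda+T_z)}=\sum_{i,j}q\bigl(\lambda-2M-1+(j-i)P\bigr).
\]
Grouping by $d=j-i\in[-(z-1),z-1]$, each $d$ occurs for exactly $z-|d|$ pairs, which gives \eqref{eq:rho}. Here I use that $\card{(s+Q)\cap(s'+Q)}=q(s'-s)$ by translation, together with the explicit formula for $q$: an arithmetic progression of step $x$ with $x+1$ terms overlaps its translate by $t$ in $x+1-|t|/x$ points when $x\mid t$ and $|t|\leq x^2$.
\end{itemize}

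Finally, if $\lambda>M$ then $\min(\lambda+Q)=\lambda>M=\max A$, so $r(\lambda)=0$.

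The only step needing care is the blockwise description of the reflection, specifically the identity $m-B=T_z$, which relies on the exact value $P=x^2+M+1$. After that, everything is bookkeeping. The degenerate case $x=0$ also needs a check: there $Q=\{0\}$, $q(t)=\ind{t=0}$, and the same argument goes through unchanged.
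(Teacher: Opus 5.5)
Your proposal is correct and follows the paper's proof. Both use the same four-way split of $G=A\cup B$ against $m+\lambda-G=(\lambda+T_z)\cup(zP+\lambda+A)$: the $A$--$A$ term is empty since $zP>M$, the two cross terms are equal under $t\mapsto m+\lambda-t$, and the $B$--$B$ term is grouped by block-index difference $d$ with multiplicity $z-|d|$. You additionally spell out details the paper leaves implicit, namely $m-2M-1=(z-1)P+x^2$ and the explicit involution.
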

\begin{proof}
The $A$--$A$ contribution is empty for $z\geq1$ because $zP>M$.
Symmetry of $T_z$ gives $m+\lambda-B=\lambda+T_z$.
Only its first block can intersect $A$, since $P>M$.
The two cross contributions have equal size by reflection and total $2r$.
For the $B$--$B$ contribution, group pairs of blocks by their index
difference $d$; there are $z-|d|$ such pairs, each contributing the indicated $q$.
\end{proof}

\begin{lemma}[Kl\o ve identities]\label{lem:ka}
For $z\geq1$ and $0\leq k\leq z$,
\begin{equation}\label{eq:canonical}
 D_{x,y,z}(2M+1+kP)=\KA(x,y,z+k).
\end{equation}
For $y\geq1$ and $t\geq0$, the sensor count and aperture of $\KA(x,y,t)$ are
\begin{align}
 N_K(t)&=4x+2y+t(x+1),\label{eq:NK}\\
 L_K(t)&=3M+1+tP=(x+1)\bigl(t(x+y)+3y+3\bigr)-5.\label{eq:LK}
\end{align}
Each such KA is feasible. The following set identities also hold:
\begin{align}
 \KA(0,y,t)&=\CNA(y-1,t+2) &&(y\geq1),\label{eq:cnaka}\\
 \KA(x,0,t)&=\CNA(x-1,t(x+1)+2)
              =\KA(0,x,t(x+1)) &&(x\geq1).\label{eq:zero}
\end{align}
Every interval $[0,k-1]$ with $k\geq2$ equals $\KA(0,1,k-2)$.
Finally, if $y,z\geq1$, $\lambda=M-x^2\geq0$ and $r(\lambda)=x+1$, then
\begin{equation}\label{eq:absorb}
 D_{x,y,z}(\lambda)=\KA(x,y,z-1).
\end{equation}
\end{lemma}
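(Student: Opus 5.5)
Every identity in the statement is a finite set equation. I will verify each one directly from the definitions \eqref{eq:base}--\eqref{eq:ka}, using three facts repeatedly: $M-A=A$ from Lemma~\ref{lem:cna}, the symmetry of $T_z$ from Lemma~\ref{lem:generator}, and the factorization
\[
P=x^2+(x+1)(y+1)-1=(x+1)(x+y),
\]
valid for $y\geq1$.

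\textbf{Step 1: the identity \eqref{eq:canonical}.} Put $\lambda=2M+1+kP$. By Lemma~\ref{lem:generator}, $m=M+zP$, so $L=3M+1+(z+k)P$. I split the reflected set as $L-G=(L-A)\cup(L-2M-1-T_z)$ and treat the two parts separately.
\begin{itemize}
\item Since $M-A=A$, we get $L-A=A+2M+1+(z+k)P$. This is the right base of $\KA(x,y,z+k)$.
\item Symmetry of $T_z$ means $T_z=(z-1)P+x^2-T_z$. Hence
\[
L-2M-1-T_z=M+(k+1)P-x^2+T_z=2M+1+kP+T_z .
\]
\end{itemize}
The second part consists of the blocks of index $k,\dots,k+z-1$. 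The original $B=2M+1+T_z$ consists of the blocks of index $0,\dots,z-1$. Because $k\leq z$, these two index ranges together cover exactly $0,\dots,z+k-1$. So the union is $2M+1+T_{z+k}$, and together with $A$ and the right base this is $\KA(x,y,z+k)$.

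\textbf{Step 2: the formulas \eqref{eq:NK}--\eqref{eq:LK}.} The three parts of $\KA(x,y,t)$ are pairwise disjoint by the gap argument in Lemma~\ref{lem:generator}.
\begin{itemize}
\item Counting points gives $2a+t(x+1)$, and $a=2x+y$ gives \eqref{eq:NK}.
\item The maximum is $2M+1+tP+M=3M+1+tP$.
\item Substituting $M=(x+1)(y+1)-2$ and $P=(x+1)(x+y)$ gives the factored form in \eqref{eq:LK}.
\end{itemize}

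\textbf{Step 3: feasibility of every KA.}
\begin{itemize}
\item For $t\geq1$, take $z=t$ and $k=0$ in \eqref{eq:canonical}. Then $\KA(x,y,t)=D_{x,y,t}(2M+1)$. By Lemma~\ref{lem:hole} we have $H\geq m+M+1\geq 2M+1$, so Lemma~\ref{lem:feasibility} applies.
\item For $t=0$, I check directly. The set is $A\cup(A+2M+1)$. By Lemma~\ref{lem:cna} its sumset is
\[
[0,2M]\cup[2M+1,4M+1]\cup[4M+2,6M+2],
\]
which is contiguous.
\end{itemize}

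\textbf{Step 4: the degenerate identities.}
\begin{itemize}
\item For \eqref{eq:cnaka}, set $x=0$. Then $A=[0,y-1]$, $M=y-1$, $P=y$, $Q=\{0\}$ and $T_t=\{0,y,\dots,(t-1)y\}$.
\begin{itemize}
\item The set $\KA(0,y,t)$ is then $[0,y-1]$, followed by the points $2y-1+iy$ for $0\leq i<t$, followed by $[(t+2)y-1,(t+3)y-2]$.
\item In $\CNA(y-1,t+2)$ we have $A_1=[0,y-2]$, $S=\{y-1+jy:0\leq j\leq t+1\}$ and $A_2=[(t+2)y,(t+3)y-2]$.
\item The two lists agree once the first and last points of $S$ are absorbed into the neighbouring intervals.
\end{itemize}
\item For \eqref{eq:zero}, set $y=0$, so $A=[0,x-1]$ and $M=x-1$. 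Then $Q+A=[0,x^2+x-1]$ is an interval. The same block bookkeeping identifies $\KA(x,0,t)$ with $\CNA(x-1,t(x+1)+2)$. Applying \eqref{eq:cnaka} with $y=x$ and $t(x+1)$ in place of $t$ gives the second equality.
\item For the interval claim, take $x=0$, $y=1$. Then $A=\{0\}$, $M=0$, $P=1$, and $\KA(0,1,k-2)=\{0\}\cup[1,k-2]\cup\{k-1\}=[0,k-1]$.
\end{itemize}

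\textbf{Step 5: the absorption identity \eqref{eq:absorb}.} Let $\lambda=M-x^2$. Then $L=m+\lambda=2M+1+(z-1)P$.
\begin{itemize}
\item As in Step 1, $L-A=A+2M+1+(z-1)P$, which is the right base of $\KA(x,y,z-1)$.
\item As in the proof of Lemma~\ref{lem:count}, $L-B=\lambda+T_z$.
\item Its first block is $\lambda+Q$. The hypothesis $r(\lambda)=x+1=\card Q$ says exactly that $\lambda+Q\subseteq A$.
\item For $i\geq1$, the relation $P=x^2+M+1$ gives $\lambda+iP+Q=2M+1+(i-1)P+Q$. These are the blocks of index $0,\dots,z-2$ of $B$.
\end{itemize}
So $D=A\cup B\cup(L-A)$ after discarding the first block, which already lies in $A$. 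It remains to see that the last block $2M+1+(z-1)P+Q$ of $B$ is not a new point set. This block is the reflection $L-(\lambda+Q)$. Since $\lambda+Q\subseteq A$, it lies inside $L-A$. Hence
\[
D=A\cup(2M+1+T_{z-1})\cup(A+2M+1+(z-1)P)=\KA(x,y,z-1).
\]

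\textbf{Main obstacle.} This last absorption step needs the most care. One must keep track of which reflected block falls into a base, and this is where the hypothesis $r(\lambda)=x+1$ enters. By comparison, the identities in Step 4 are routine but index-heavy bookkeeping.
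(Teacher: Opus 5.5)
Your proposal is correct and follows essentially the paper's route: direct set bookkeeping with $M-A=A$, the symmetry of $T_z$, the covering of block indices $0,\dots,z+k-1$, and, for the absorption identity, the inclusion $\lambda+Q\subseteq A$ (you absorb the last original block by reflecting this inclusion, whereas the paper uses the equivalent inclusion $Q\subseteq A$). Two small fixes: in Step~5 one has $L=m+\lambda=3M+1+(z-1)P$, not $2M+1+(z-1)P$, and your very next line $L-A=A+2M+1+(z-1)P$ holds only with this corrected value; and for the $y=0$ identity the fact that drives the regrouping is $T_t=x[0,t(x+1)-1]$ (from $P=x(x+1)$), not that $A+Q$ is an interval.
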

\begin{proof}
In \eqref{eq:canonical}, the reflected $A$ is $A+2M+1+(z+k)P$,
and the reflected $B$ comprises middle blocks $k,\ldots,z+k-1$.
Their union with blocks $0,\ldots,z-1$ gives exactly \eqref{eq:ka}.
The three parts of \eqref{eq:ka} are disjoint, giving \eqref{eq:NK}--\eqref{eq:LK}.
Here $P=(x+1)(x+y)$ for $y\geq1$.
The shift $2M+1$ is feasible by Lemmas~\ref{lem:hole} and \ref{lem:feasibility};
for $t=0$ the same identity holds directly.

When $x=0$, the middle points are $(i+2)y-1$ and the two end blocks
are $[0,y-1]$ and $[(t+2)y-1,(t+3)y-2]$. Regrouping their end points
gives \eqref{eq:cnaka}. When $y=0$, $T_t=x[0,t(x+1)-1]$;
the same regrouping gives \eqref{eq:zero}. The interval identity follows
by taking $x=0,y=1$.

For \eqref{eq:absorb}, symmetry of $A$ and $Q$ gives
$r(M-x^2)=\card{A\cap Q}$, so $Q\subseteq A$ and $\lambda+Q\subseteq A$.
All later reflected middle blocks coincide with the preceding original
blocks since $\lambda+P=2M+1$.
The last original block is absorbed by the reflected $A$, which is
$A+2M+1+(z-1)P$. The remaining union is \eqref{eq:absorb}.
\end{proof}

\begin{lemma}[Removing zero overlap]\label{lem:remove}
At $\lambda=H$, $\rho(H)=0$. A feasible shift $\lambda<H$ with
$\rho(\lambda)=0$ cannot be optimal at its sensor count.
\end{lemma}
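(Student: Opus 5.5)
The argument has two parts. First we compute $\rho(H)$ from Lemma~\ref{lem:count}. Then we show that a zero-overlap shift $\lambda<H$ is beaten by the shift $H$ at the same sensor count.

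\emph{Computing $\rho(H)$.} For $z=0$ we have $H=2M+1$. Since $A\subseteq[0,M]$ and $M+H-A\subseteq[M+1+M,2M+1+M]$, the two sets are disjoint, so $\rho=0$.

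For $z\geq1$ we use \eqref{eq:rho}. Every case of \eqref{eq:hole} gives $H\geq m+M+1>M$, so $r(H)=0$. Each remaining term is $q(H-2M-1+dP)$ with $|d|\leq z-1$. The argument is $H-2M-1+dP\geq m-M+(1-z)P=P>x^2$ because $H\geq m+M+1$ and $m=M+zP$. It is therefore outside the support of $q$. The exception is $x=0$, where the support is $\{0\}$ and the argument is still positive. Hence $\rho(H)=0$. The case $x+y=1$, where $H=2m+1$, is covered by the same inequality.

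\emph{Comparing with the shift $H$.} Let $\lambda<H$ be feasible with $\rho(\lambda)=0$. Then $D_{x,y,z}(\lambda)$ has $N=2n$ sensors and aperture $m+\lambda<m+H$. The shift $H$ is feasible by Lemma~\ref{lem:feasibility}. It has $\rho(H)=0$, so it uses the same parameters and also has $2n$ sensors, with strictly larger aperture $m+H$. Therefore $D_{x,y,z}(\lambda)$ is not of maximum aperture among feasible S-KMAs with $N$ sensors, so it cannot be optimal. The key point is that $\rho(H)=0$ keeps the sensor count fixed, which is exactly the issue raised in the introduction.

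The only real work is the first step, namely checking that $\rho(H)=0$ in every branch of \eqref{eq:hole}. In particular one must confirm that the $d=-(z-1)$ term falls outside $[-x^2,x^2]$ and that degenerate parameters ($x=0$ or $y=0$) do not break the bound $P>x^2$. Both reduce to $P-x^2=M+1\geq1$.
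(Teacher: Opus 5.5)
Your proposal is correct and follows the same route as the paper. You use $H\geq m+M+1$ from \eqref{eq:hole} to get $r(H)=0$ and every offset in \eqref{eq:rho} at least $P>x^2$, with a direct disjointness check for $z=0$, and then compare $D(\lambda)$ with the feasible endpoint $D(H)$, which has the same $2n$ sensors and strictly larger aperture.
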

\begin{proof}
For $z\geq1$, $H>M$ and each offset in \eqref{eq:rho} at $H$ is
at least $P>x^2$, by \eqref{eq:hole}; this also holds for the interval cases.
For $z=0$, $M+H>2M$. Thus both $D(\lambda)$ and $D(H)$ have $2n$
sensors, and the latter has strictly larger aperture.
\end{proof}
This lemma compares with a feasible S-KMA. It does not assume that $D(H)$ is a KA.
We will also use the elementary prefix bound, for $0\leq\lambda\leq M$,
\begin{equation}\label{eq:prefix}
 N\geq n+\card{A\cap[0,\lambda-1]}.
\end{equation}
Indeed each $u\in A$ with $u<\lambda$ produces a distinct reflected
sensor $m+\lambda-u>m$, outside $G$.

\section{Relaxations of the KA aperture}
Eliminating $t$ from \eqref{eq:NK}--\eqref{eq:LK}, at sensor count $V$,
gives the polynomial identity
\begin{equation}\label{eq:relaxidentity}
 L_K=(x+y)(V+3)-3xy-4x^2-2y^2-2.
\end{equation}
For fixed $x$, completing the square in $y$ yields
\begin{align}
 U_x(V)&=\frac{V^2+2Vx+6V-23x^2+6x-7}{8},\label{eq:Ux}\\
 U_x(V)-L_K&=2\left(y-\frac{V-3x+3}{4}\right)^2.\label{eq:square}
\end{align}
These are the fixed-$x$ relaxation of \cite[Theorem 4]{RK2021},
and the identities above give a direct derivation.
Maximizing in real $x$ gives
\begin{equation}\label{eq:U}
 U(V)=\frac{3V^2+18V-19}{23}.
\end{equation}
For $x\geq0$, $U_x(V)$ increases with $V\geq0$. Define
\begin{equation}\label{eq:Phi}
 \Phi(x,N)=U_x(N-1)+x,\qquad
 F(N)=\max_{x\in\mathbb R}\Phi(x,N)=\frac{3N^2+13N-30}{23}.
\end{equation}
The maximum is attained at $x=(N+6)/23$.
Two useful identities are
\begin{align}
 8\bigl(U_x(N-1)-U_x(N-s)\bigr)
   &=(s-1)(2N-s+2x+5),\label{eq:difference}\\
 U(N)-F(N)&=\frac{5N+11}{23}.\label{eq:gap}
\end{align}

\section{Upper bounds for a non-Kl\o ve optimizer}
In this section it suffices to treat $N\geq6$. An array is called
\emph{KA-dominated} if a KA with the same sensor count has strictly
larger aperture. Such an array cannot be an optimizer.

The following partition makes the coverage of the argument explicit.
First, the cases $z=0$, $z\geq1,x=0$, and $z\geq1,x\geq1,y=0$
are handled by Lemmas~\ref{lem:zzero}, \ref{lem:xzero}, and
\ref{lem:yzero}, respectively. These are all degenerate generators.
For the remaining case $x,y,z\geq1$, Lemma~\ref{lem:remove} excludes
zero overlap below $H$. The endpoint $H$ is a KA by
\eqref{eq:canonical}, except when $y=1,x\geq2$, which is treated
in Lemma~\ref{lem:high}. Positive overlap is split at $\lambda=M$:
Lemma~\ref{lem:low} covers $0\leq\lambda\leq M$, and
Lemma~\ref{lem:high} covers $M<\lambda\leq H$.
Every branch yields KA membership, strict domination, or one of the
two bounds collected in Proposition~\ref{prop:upper}.

\subsection{Positive overlap above the CNA aperture}
\begin{lemma}\label{lem:high}
Suppose $x,y,z\geq1$, $M<\lambda\leq H$ and $\rho(\lambda)>0$.
Then $D(\lambda)\in\KK$ or $L\leq\Phi(x,N)$.
At the exceptional endpoint $H=m+M+2$ with $y=1,x\geq2$,
the bound $L\leq\Phi(x,N)$ holds as well.
\end{lemma}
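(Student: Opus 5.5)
For $M<\lambda$ the cross term $r(\lambda)$ vanishes by Lemma~\ref{lem:count}. Hence $\rho(\lambda)$ is a sum of $B$--$B$ contributions only, and the whole argument is bookkeeping on \eqref{eq:rho}. The plan is to parametrize each contributing offset and compute $N$ and $L$ in closed form. Write $\lambda=2M+1+kP+t$ with $t=ux$, $|u|\leq x$, so that the offset $\lambda-2M-1+dP$ equals $t$ at $d=-k$. Positivity of $\rho$ means some $d\in[-(z-1),z-1]$ gives an offset in $x\mathbb Z\cap[-x^2,x^2]$, so at least one such $k\in[-(z-1),z-1]$ exists. The constraint $\lambda>M$ rules out $k\leq-1$: already for $k=-1$ we would need $M-x^2+t>M$, i.e.\ $t>x^2$. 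Likewise $\lambda\leq H\leq m+M+2$ bounds $k\leq z$.

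First I will dispose of $t=0$. Then $\lambda=2M+1+kP$ with $0\leq k\leq z$, and \eqref{eq:canonical} gives $D(\lambda)=\KA(x,y,z+k)\in\KK$. For $t\neq0$ I will compute $\rho$ exactly. Consecutive offsets differ by $P=x^2+M+1$. A second contributing index $d=-k\pm1$ would need $|t\pm P|\leq x^2$, forcing $M+1\leq 2x^2-|t|$. With $M=(x+1)(y+1)-2$, this can happen only for small $y$ relative to $x$. Generically, therefore,
\[
 \rho=(z-k)\,(x+1-|u|),\qquad N=2n-(z-k)(x+1-|u|),\qquad L=m+2M+1+kP+ux.
\]
In the rare two-index case I add the second term $(z-k\mp1)\,q(t\pm P)$; that term only decreases $N$ further. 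Substituting $n=a+z(x+1)$, $m=M+zP$ and $P=(x+1)(x+y)$ turns the target $L\leq\Phi(x,N)=U_x(N-1)+x$ into a polynomial inequality in $(y,z,k,u)$.

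To verify it I will compare against the KA at the same $x$ via \eqref{eq:relaxidentity}--\eqref{eq:square}. The $u=0$ configuration $\KA(x,y,z+k)$ has sensor count $V_0=2n-(z-k)(x+1)$ and satisfies $L_K(V_0)\leq U_x(V_0)$. Changing $u$ moves the aperture by $ux$ and the sensor count by $(z-k)|u|$. Then \eqref{eq:difference} shows that $U_x$ grows by roughly $(z-k)|u|(2N+2x+5)/8$, which should dominate $ux\leq x^2$ as soon as $z-k\geq1$. The leftover slack is absorbed by the additive $+x$ in $\Phi$. For $u<0$ the aperture even decreases, so only $u>0$ needs care.

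Finally, the exceptional endpoint $y=1$, $x\geq2$, $\lambda=H=m+M+2$. Here $M=2x$, and I will compute $\rho(H)$ directly from \eqref{eq:rho}; I expect it to be $0$, so $N=2n$. Then $L=2m+M+2$ and $n=2x+1+z(x+1)$, and $L\leq U_x(2n-1)+x$ becomes an explicit quadratic inequality in $z$ that I will check via \eqref{eq:square}. The main obstacle is the uniform verification in the $t\neq0$ case, where $k$ near $z$ leaves little overlap and the two-index regime is possible. If the inequality is tight there, I will first try splitting into $z-k=1$ versus $z-k\geq2$, and treating the small-$y$ two-index regime by direct substitution.
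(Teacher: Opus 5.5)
Your route matches the paper's: compare with $\KA(x,y,z+k)$, write $N=N_K(z+k)+s$ with $s=(z-k)|u|$ and $L=L_K(z+k)+ux$, then use \eqref{eq:difference} plus the slack $+x$, and handle the endpoint via \eqref{eq:square}. The gap is at the step where you allow a second active index. Your treatment of that case runs the wrong way. $L=m+\lambda$ does not depend on $\rho$, whereas $\Phi(x,N)=U_x(N-1)+x$ increases with $N$. So an extra overlap term makes $L\le\Phi(x,N)$ \emph{harder}, not harmless. It could push $s=N-N_K(z+k)$ to $0$ or below, and then both of your arguments fail:
\begin{itemize}
\item for $u<0$ you need $s\ge1$ to get $L<L_K(z+k)\le U_x(N-s)\le U_x(N-1)$;
\item for $u>0$ you need $s-1\ge u-1$.
\end{itemize}
Magnitudes alone cannot exclude the case either. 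The correct condition for an adjacent offset $t\pm P$ to land in $[-x^2,x^2]$ is $M+1\le|t|$ (not $M+1\le 2x^2-|t|$). This is compatible with $|t|\le x^2$, for example when $y=1$ and $x\ge3$. So ``direct substitution in the small-$y$ regime'' is not a plan that closes the proof.

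The missing idea is divisibility. Two active offsets both lie in $x\mathbb Z$ and differ by $P$, so $x\mid P$. Since $P=x^2+x(y+1)+y\equiv y\pmod x$, this forces $x\mid y$, hence $y\ge x$ and $P\ge 2x^2+2x>2x^2$. No two points of $[-x^2,x^2]$ are that far apart. So exactly one index is active for \emph{every} $y$, and $s=(z-k)|u|\ge|u|\ge1$ whenever $u\ne0$.

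With this in hand, no split on $z-k$ is needed. For $u>0$ the target is $8x(u-1)\le(s-1)(2N-s+2x+5)$. This follows from $s-1\ge u-1$ together with $2N-s+2x+5\ge 2N_K(z+k)+2x+6\ge 12x+12$, using $N_K(z+k)\ge5x+3$. Your heuristic that a gain of roughly $(z-k)|u|(2N+2x+5)/8$ beats $ux$ overcounts, since $\Phi$ evaluates $U_x$ at $N-1$. The exact comparison is $s-1$ against $u-1$, with the $+x$ covering the remainder.

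Your endpoint plan is sound. At $H$ every offset is $zP+1+dP\ge P+1>x^2$, so $\rho(H)=0$. Then \eqref{eq:square} gives $8\Phi(x,N)-8L=(v-5)(v+3)$ with $v=N-3x-1=(x+1)(2z+1)\ge9$.
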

\begin{proof}
Put $\delta=\lambda-2M-1$. Since $\lambda>M$, no index $d\geq1$
can contribute to \eqref{eq:rho}. There is at most one active index.
Indeed $P>x^2$ permits at most two, necessarily adjacent.
If both were active, both offsets would be divisible by $x$, so $x\mid P$.
Since $P=x^2+x(y+1)+y$, this forces $x\mid y$, hence $y\geq x$.
It follows that $P\geq2x^2+2x>2x^2$, whereas two offsets in $[-x^2,x^2]$
cannot be this far apart. This is a contradiction.

Let $d\in[-(z-1),0]$ be the active index. Put
$w=z+d\geq1$, $t=z-d\geq1$, and $\delta+dP=ux$ with $|u|\leq x$.
If $u=0$, Lemma~\ref{lem:ka} applies. Otherwise
\[
 N=N_K(t)+s,\qquad L=L_K(t)+ux,\qquad s=w|u|\geq1.
\]
For $u<0$, monotonicity gives $L\leq U_x(N-1)\leq\Phi(x,N)$.
For $u>0$, \eqref{eq:difference} reduces the desired result to
\begin{equation}\label{eq:highineq}
 8x(u-1)\leq(s-1)(2N-s+2x+5).
\end{equation}
Here $N-s=N_K(t)\geq5x+3$, so the second factor is at least $12x+12$;
the first factor is at least $u-1$. This proves \eqref{eq:highineq}.

At the exceptional endpoint, $\rho(H)=0$, $N=N_K(2z)$ and
$L=L_K(2z)+1$. Put $v=N-3x-1=x+1+2z(x+1)\geq9$.
Using \eqref{eq:square} with $y=1$ gives directly
\[
 8\Phi(x,N)-8L=(v-5)(v+3)>0.
\]
\end{proof}

\subsection{Overlap at or below the CNA aperture}
\begin{lemma}\label{lem:low}
Suppose $x,y,z\geq1$ and $0\leq\lambda\leq M$.
Then $D(\lambda)$ belongs to $\KK$, is KA-dominated, or has
$L\leq\Phi(x,N)$.
\end{lemma}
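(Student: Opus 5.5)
We treat $x,y,z\geq1$ and $0\leq\lambda\leq M$. Here the reflected sensors split into three groups. The reflected middle blocks are $\lambda+T_z$, and the reflected copy of $A$ is $A+m+\lambda-M = A+\lambda+zP+M+1$, which lies above $G$. By Lemma~\ref{lem:count} the overlap is $\rho(\lambda)=2r(\lambda)+\sum_d(z-|d|)q(\lambda-2M-1+dP)$. Since $\lambda\leq M$, the offset $\lambda-2M-1+dP$ lies in $[-x^2,x^2]$ only for $d=1$, and then only when $\lambda\geq M-x^2$. So at most one middle index contributes, with multiplicity $z-1$.

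The plan is to write $N$ and $L$ relative to a nearby KA, exactly as in Lemma~\ref{lem:high}. The natural comparison is $\KA(x,y,z-1)$ with $N_K(z-1)$ sensors and aperture $L_K(z-1)=3M+1+(z-1)P$. We have $L=m+\lambda=M+zP+\lambda$, so
\[
L-L_K(z-1)=\lambda+P-2M-1=\lambda-(M-x^2).
\]
Put $\lambda=M-x^2+ux$ when the $d=1$ term is active, with $-x\leq u\leq x$. Then $L=L_K(z-1)+ux$. The sensor count is $N=2n-\rho=N_K(z-1)+s$, where $s$ collects the deficit from $r(\lambda)<x+1$ and from $q$-terms smaller than $x+1$.

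\emph{Case $u=0$ and $r=x+1$.} This is \eqref{eq:absorb}, so $D\in\KK$.

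\emph{Cases $u<0$.} Here $L\leq L_K(z-1)\leq U_x(N-1)\leq\Phi(x,N)$, because $U_x$ is monotone in $V$ and $s\geq0$.

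\emph{Cases $u>0$.} We need $8x(u-1)\leq(s-1)(2N-s+2x+5)$ as in \eqref{eq:highineq}. Here $s\geq(z-1)u+2(x+1-r)\geq1$, and we would check that $r(\lambda)\leq x+1-u$ roughly, which makes $s$ comparable to $u$. The factor $2N-s+2x+5$ is again of size at least $12x$, which closes the inequality.

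When $\lambda<M-x^2$, or when $\lambda\geq M-x^2$ but $x\nmid(\lambda-M+x^2)$, only $2r(\lambda)$ contributes. Then $L=M+zP+\lambda\leq 2M+zP-x^2-1$, which is well below $L_K(z-1)$, while $N\geq N_K(z-1)$ up to the sign of a correction. We would use the prefix bound \eqref{eq:prefix} to show that $N\geq n+\card{A\cap[0,\lambda-1]}$, so each unit of $\lambda$ costs roughly one sensor relative to $2n$. We would then compare with a KA of the same sensor count: take $\KA(x,y,t)$ with $t$ chosen so that $N_K(t)\leq N$, and pad or adjust $y$. We would show that $L<$ that aperture, giving KA-domination. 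Otherwise we fall back to $L\leq U_x(N-1)$ via \eqref{eq:relaxidentity}. The small-$r$ subcase with $\rho=0$ is excluded by Lemma~\ref{lem:remove}.

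The main obstacle is the accounting of $r(\lambda)=\card{A\cap(\lambda+Q)}$ for $\lambda$ near $M-x^2$, using the structure of $A$ from Lemma~\ref{lem:cna}. One must show that the sensor deficit $s$ grows at least linearly in the aperture gain $ux$, and the constant matters only mildly thanks to the $12x$ slack. I would first try the crude bound $r(\lambda)\leq x+1$ together with $s\geq(z-1)|u|$. For $z=1$ this gives no middle contribution, so there I would instead use \eqref{eq:prefix} directly: it gives $N-N_K(0)\geq\lambda-(M-x^2)$ up to $O(x)$. For the tiny cases I would rely on the assumption $N\geq6$.
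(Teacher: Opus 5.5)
There is a genuine gap, and it lies exactly where $L$ exceeds your comparison aperture. With $\tau=\lambda-(M-x^2)$ you correctly get $L=L_K(z-1)+\tau$, but two of your claims about the remaining cases are false.

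\emph{The $d=0$ overlap term.} For $\lambda\le M$, the index $d=0$ can contribute to \eqref{eq:rho}, not only $d=1$. Its offset $\lambda-2M-1$ lies in $[-x^2,0]$ once $\lambda\ge 2M+1-x^2$, which happens for suitable $\lambda\le M$ as soon as $x^2\ge M+1$ (e.g.\ $y=1$, $x\ge3$). For $(x,y,z,\lambda)=(3,1,1,4)$ one has $G=[0,6]\cup\{13,16,19,22\}$ and $G\cap(26-G)=\{4,13,22\}$, so $\rho=3$ although $2r(4)=2$. The correct count is
\[
 s=N-N_K(z-1)=2(x+1-r)+(z-1)(x+1-q_1)-zq_0,\qquad q_0=q(\lambda-2M-1),\ q_1=q(\tau).
\]
The $d=0$ term carries the larger multiplicity $z$ and a minus sign. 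Hence none of your lower bounds on $s$, not even $s\ge0$, is justified when $q_0>0$. This also refutes ``no middle contribution'' for $z=1$.

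\emph{The case $\tau>0$, $x\nmid\tau$.} Your last paragraph treats $\lambda\ge M-x^2$ with $x\nmid\tau$ as if $L$ lay below $L_K(z-1)$. In fact $\tau\in[1,x^2]$ there, so $L>L_K(z-1)$; in the example $L=26$, whereas your bound $2M+zP-x^2-1$ equals $18$. The planned KA-domination argument therefore starts from a false inequality. This is also the hard regime: every case with $q_0>0$ lies in it. Indeed $q_0>0$ forces $\tau\ge M+1$, and $x\mid\tau$ would give $x\mid M+1$, i.e.\ $x\mid y$, hence $M+1>x^2$.

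\emph{What is missing.} You need a lower bound on $s$ for all $\tau\in[1,x^2]$, independent of divisibility. The paper obtains it from three facts:
\begin{enumerate}
\item $q_0+q_1\le x+1$, because the two translates of $Q$ are disjoint ($P>x^2$);
\item $r\le\lfloor(M-\lambda)/x\rfloor+1$, and also $r\le 3$;
\item the explicit value $q_0=x+1-(2M+1-\lambda)/x$ when $q_0>0$.
\end{enumerate}
Together these give $s\ge2(x+1-r)-q_0\ge x-1+(\lambda+1)/x$, so $s\ge x$ whenever $q_0>0$. Then:
\begin{enumerate}
\item For $1\le\tau\le x$, $s=0$ would force $r=x+1$, i.e.\ $M+\tau\in A$, which is impossible. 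So $s\ge1$ and $L\le U_x(N-1)+x$.
\item For $x+1\le\tau\le x^2$, one shows $s\ge x$ also when $q_0=0$. Then $(s-1)(2N-s+2x+5)\ge(x-1)(11x+9)\ge8(\tau-x)$ together with \eqref{eq:difference} gives the bound.
\end{enumerate}
Your estimate $r\le x+1-u$ is the right idea when $x\mid\tau$ (there $q_0=0$ automatically). It extends to $r\le\lfloor(x^2-\tau)/x\rfloor+1$ in general, but it cannot close the argument without controlling $q_0$.

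\emph{Smaller points.} For $\tau<0$ with $s=0$ (e.g.\ $x=y=z=1$, $\lambda=0$, where $N=6$, $L=6$), monotonicity gives only $L_K(z-1)\le U_x(N)$, not $U_x(N-1)$. This case must be handled as KA-domination by $\KA(x,y,z-1)$, which is the only comparison KA needed, rather than by readjusting $t$ and $y$. Also, the prefix-bound claim $N-N_K(0)\ge\tau-O(x)$ is false: at $z=1$, $\lambda=M$ one has $s=2x$ while $\tau=x^2$. Only a bound of order $\tau/x$ holds, such as $s\ge x$, and that is all the argument needs.
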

\begin{proof}
Put $t=z-1$, $\tau=\lambda-(M-x^2)$, $q_0=q(\lambda-2M-1)$,
$q_1=q(\tau)$ and $r=r(\lambda)$.
Only the indices $d=0,1$ can contribute in \eqref{eq:rho}.
Thus, with $s=N-N_K(t)$,
\begin{align}
 L&=L_K(t)+\tau,\notag\\
 s&=2(x+1-r)+(z-1)(x+1-q_1)-zq_0.\label{eq:s}
\end{align}
Here $s$ is the sensor-count difference from the comparison array
$\KA(x,y,t)$, and $\tau$ is its aperture difference. Controlling $s$
in terms of $\tau$ allows a comparison at the actual sensor count $N$.
We need three elementary facts. First,
$q_0+q_1\leq x+1$ because the two translated copies of $Q$ involved
are disjoint ($P>x^2$).
Second, $r\leq3$: the progression $\lambda+ux$, $0\leq u\leq x$,
meets each end interval of $A$ at most once, and meets $S$ at most once
because its residues modulo $x+1$ are all distinct.
Third, if $q_0>0$, then $s\geq x$. In fact
\[
 r\leq\left\lfloor\frac{M-\lambda}{x}\right\rfloor+1,
 \qquad q_0=x+1-\frac{2M+1-\lambda}{x},
\]
so \eqref{eq:s} and the first fact imply
\[
 s\geq2(x+1-r)-q_0
   \geq x-1+\frac{\lambda+1}{x}>x-1.
\]
The integrality of $s$ proves the claim. These facts also show $s\geq0$
in every case, since all terms of \eqref{eq:s} are nonnegative when $q_0=0$.

If $\tau\leq0$ and $s\geq1$, the result follows from
$L\leq U_x(N-s)\leq U_x(N-1)$.
If $s=0$, the preceding facts force $q_0=0$ and $r=x+1$.
When $\tau<0$, the array is KA-dominated by $\KA(x,y,t)$.
When $\tau=0$, identity \eqref{eq:absorb} says it is that KA itself.

If $1\leq\tau\leq x$, equality $s=0$ would imply $r=x+1$, requiring
$\lambda+x^2=M+\tau\in A$, which is impossible. Hence $s\geq1$ and
$L\leq U_x(N-1)+x$.

It remains to treat $x+1\leq\tau\leq x^2$, so $x\geq2$.
We claim $s\geq x$. This is already proved if $q_0>0$.
If $q_0=0$, membership $\lambda+ux\in A$ requires
$(x-u)x\geq\tau\geq x+1$, so $u\leq x-2$.
Consequently $r\leq1$ for $x=2$, $r\leq2$ for $x=3$, and $r\leq3$
for $x\geq4$. Equation~\eqref{eq:s} gives $s\geq4,4,2x-4$ respectively,
always at least $x$.
Now $N-s=N_K(t)\geq4x+2$, whence
\[
 (s-1)(2N-s+2x+5)\geq(x-1)(11x+9)
     \geq8x(x-1)\geq8(\tau-x).
\]
Apply \eqref{eq:difference} to $L\leq U_x(N-s)+\tau$.
\end{proof}

\subsection{Degenerate generators}
\begin{lemma}\label{lem:xzero}
If $x=0$, $y,z\geq1$, every feasible $D(\lambda)$ is a KA or is
KA-dominated.
\end{lemma}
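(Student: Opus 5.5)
In this case the overlap formula of Lemma~\ref{lem:count} collapses completely, so I will compute $\rho(\lambda)$ for every feasible shift. Each $D(\lambda)$ will then be either recognized as a KA via Lemma~\ref{lem:ka} or dominated at the same sensor count.

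\textbf{Explicit data.} With $x=0$ we have $Q=\{0\}$, $A=[0,y-1]$, $M=y-1$ and $P=M+1=y$. Hence
\[
G=[0,y-1]\cup\{2y-1+iy:0\le i<z\},\qquad n=y+z,\qquad m=(z+1)y-1 .
\]
\begin{itemize}
\item If $y=1$, then $x+y=1$. Lemma~\ref{lem:hole} gives $H=2m+1$, and $G$ is an interval. Every feasible $D(\lambda)$ is then an integer interval with at least two points. By Lemma~\ref{lem:ka} it equals $\KA(0,1,k-2)$, so this subcase is done.
\item If $y\ge2$, the ``otherwise'' branch of \eqref{eq:hole} gives $H=m+M+1=(z+2)y-1$.
\end{itemize}
From now on assume $y\ge2$.

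\textbf{Computing the overlap.} Since $q(t)=\ind{t=0}$ and $r(\lambda)=\ind{\lambda\le y-1}$, formula \eqref{eq:rho} becomes
\[
\rho(\lambda)=2\cdot\ind{\lambda\le y-1}+\sum_{|d|\le z-1}(z-|d|)\,\ind{\lambda=(2-d)y-1}.
\]
For $\lambda\ge0$ the sum is nonzero only when $\lambda=ky-1$ with $1\le k\le z+1$, and then only for $d=2-k\le1$. This splits the feasible range $0\le\lambda\le H$ into four kinds of shift.

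\begin{enumerate}
\item \emph{$\lambda=ky-1$ with $2\le k\le z+2$.} This is $\lambda=2M+1+(k-2)P$ with $0\le k-2\le z$. By \eqref{eq:canonical}, $D(\lambda)=\KA(0,y,z+k-2)$; the endpoint $H$ is the case $k=z+2$.
\item \emph{$\lambda=y-1=M$.} Here $\lambda=M-x^2$ and $r(\lambda)=1=x+1$. Identity \eqref{eq:absorb} gives $D(\lambda)=\KA(0,y,z-1)$.
\item \emph{$\lambda\le y-2$.} Only the $r$-term survives, so $\rho=2$ and $N=2y+2z-2$. Moreover
\[
L=m+\lambda\le (z+2)y-3 .
\]
The comparison array $\KA(0,y,2z-2)$ has the same sensor count by \eqref{eq:NK}. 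By \eqref{eq:LK} its aperture is $L_K=(2z+1)y-2$. Now $(z+2)y-3<(2z+1)y-2$ is equivalent to $(1-z)y<1$, which holds for $z\ge1$. So $D(\lambda)$ is KA-dominated.
\item \emph{All remaining shifts.} These satisfy $y-1<\lambda<H$ with $\lambda\not\equiv-1\pmod y$. Then $\rho(\lambda)=0$, and Lemma~\ref{lem:remove} shows $D(\lambda)$ is not optimal.
\end{enumerate}

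\textbf{Remaining gap.} Item 4 does not literally say $D(\lambda)$ is KA-dominated, which is what Lemma~\ref{lem:xzero} asserts. The argument of Lemma~\ref{lem:remove} compares $D(\lambda)$ with $D(H)$. That array has $2n$ sensors and is itself $\KA(0,y,2z)$ by \eqref{eq:canonical}, so it is a KA. Its aperture exceeds that of $D(\lambda)$, so item 4 also yields KA-domination.

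\textbf{Main obstacle.} The only real work is the bookkeeping in the overlap computation. One must confirm that the residue condition $\lambda\equiv-1\pmod y$ together with $|d|\le z-1$ yields exactly the indices listed. One must also check item 3: the strict inequality there is the single place where domination needs an actual comparison rather than a set identity.
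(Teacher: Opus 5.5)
\textbf{There is a gap in the $y=1$ subcase.} Your treatment of $y\geq2$ is correct and is essentially the paper's argument. The paper checks the set at $\lambda=M$ directly as $\CNA(y-1,z+1)$ rather than citing \eqref{eq:absorb}, but that difference is immaterial.

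The problem is your claim that for $y=1$ every feasible $D(\lambda)$ is an integer interval. This is false. Here $G=[0,m]$ with $m=z\geq1$, so $D(\lambda)=[0,m]\cup[\lambda,m+\lambda]$. That set is an interval only when $\lambda\leq m+1$, yet the feasible range runs up to $H=2m+1\geq m+2$.

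Two examples show this:
\begin{itemize}
\item For $(x,y,z)=(0,1,1)$, the endpoint gives $D(3)=\{0,1,3,4\}$, which is not an interval.
\item For $(0,1,2)$, the feasible shift $\lambda=4$ gives $\{0,1,2,4,5,6\}$. This set is not a KA at all, since the six-sensor KAs have apertures $5$, $7$ and $8$.
\end{itemize}
So your argument leaves the shifts $m+2\leq\lambda\leq 2m+1$ uncovered. For the intermediate ones, the conclusion genuinely needs domination, not KA membership. \eqref{eq:canonical} cannot close this gap: when $y=1$ its shifts stop at $2M+1+zP=m+1<H$.

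\textbf{How to repair it.} Do what the paper does and treat the $y=1$ endpoint separately.
\begin{itemize}
\item At $\lambda=H=2m+1$ we have $D(H)=[0,m]\cup[2m+1,3m+1]=\KA(0,m+1,0)$, which equals $\CNA(m,2)$ by \eqref{eq:cnaka}. This is a KA with $2n=2m+2$ sensors.
\item For $m+2\leq\lambda\leq 2m$ the overlap $[0,m]\cap[\lambda,m+\lambda]$ is empty. So $D(\lambda)$ also has $2m+2$ sensors, but its aperture $m+\lambda$ is less than $3m+1$. Hence it is KA-dominated by $D(H)$.
\item For $\lambda\leq m+1$, $D(\lambda)$ is an interval, as you said.
\end{itemize}
With this addition your proof is complete.
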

\begin{proof}
Here $A=[0,y-1]$, $M=y-1$, $P=y$, $m=(z+1)y-1$, and $n=y+z$.
For $\lambda>M$, positive overlap requires
$\lambda=2M+1+jP$, $0\leq j<z$, hence gives a KA.
The endpoint $H$ is also a KA: if $y\geq2$, use \eqref{eq:canonical}
with $j=z$; if $y=1$, then $G=[0,m]$ and
$D(H)=\KA(0,m+1,0)$. Every other zero-overlap shift is dominated by this endpoint.

For $\lambda\leq M$, \eqref{eq:rho} gives $\rho=2+(z-1)\ind{\lambda=M}$.
At $\lambda=M$, the set is
\[
 [0,y-1]\ \cup\ \{(i+1)y-1:1\leq i\leq z\}
 \ \cup\ \bigl([0,y-1]+(z+1)y-1\bigr),
\]
which equals $\CNA(y-1,z+1)=\KA(0,y,z-1)$.
For $\lambda\leq y-2$, we have $N=2y+2z-2$ and
$L\leq(z+2)y-3$. The KA $\KA(0,y,2z-2)$ has this same $N$ and
aperture $(2z+1)y-2$, strictly larger since $z\geq1$.
\end{proof}

\begin{lemma}\label{lem:yzero}
If $y=0$, $x,z\geq1$, every feasible $D(\lambda)$ is a KA,
is KA-dominated, or satisfies $L\leq(N+3)^2/8-2$.
\end{lemma}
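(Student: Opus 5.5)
The plan is to imitate the proofs of Lemmas~\ref{lem:high} and \ref{lem:low}. The comparison family this time is the $x'=0$ family of KAs, reached through \eqref{eq:zero}. With $y=0$ we have $A=[0,x-1]$, $M=x-1$, $a=x$, $P=x(x+1)$, $n=x+z(x+1)$ and $m=x-1+zP$. By \eqref{eq:zero}, $\KA(x,0,t)=\KA(0,x,t(x+1))$ lies in $\KK$ and has
\[
N_K(t)=4x+t(x+1),\qquad L_K(t)=3x-2+tP .
\]
The target bound comes from the $x=0$ relaxation: $U_0(V)=\bigl((V+3)^2-16\bigr)/8$, so $(N+3)^2/8-2=U_0(N)$. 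By \eqref{eq:difference} with $x=0$,
\[
8\bigl(U_0(N)-U_0(N-s)\bigr)=s(2N-s+6).
\]
Every branch will therefore take the same form: write $N=N_K(t)+s$ and $L=L_K(t)+\tau$ for a suitable reference $t$, and check that either $s=0$ with $\tau\le0$ (giving KA membership or domination), or $s\ge1$ with $8\tau\le s(2N-s+6)$. In the second case $L\le U_0(N-s)+\tau\le U_0(N)$, because every KA aperture at a given sensor count is at most $U_0$ of that count, by \eqref{eq:square} applied in the $x=0$ form. The case $x=1$ is an interval generator: there $H=2m+1$, and the interval identity together with Lemma~\ref{lem:xzero}-type reasoning handles it. So I assume $x\ge2$ and $H=m+M+1=m+x$.

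For $\lambda>M$, put $\delta=\lambda-2M-1$. By Lemma~\ref{lem:count}, $r(\lambda)=0$, and $\rho$ is a sum of terms $(z-|d|)q(\delta+dP)$. Since $x\mid P$, activity forces $x\mid\delta$. Unlike Lemma~\ref{lem:high}, two adjacent indices $d,d+1$ can now both be active, because $P=x^2+x\le 2x^2$. With offsets $ux$ and $(u+1+x)x$, this needs $u\in[-x,-1]$. So I split into two cases.

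\emph{One active index.} The offset is $ux$ with $u=0$ giving a KA by \eqref{eq:canonical}. Otherwise $s=w\lvert u\rvert$ and $\tau=ux$, exactly as in Lemma~\ref{lem:high}. For $u<0$ the bound is immediate. For $u>0$ we need $8ux\le w u(2N-s+6)$, which holds since $N-s\ge4x$.

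\emph{Two active indices.} This is the new degenerate phenomenon and the step I expect to be the main obstacle. Here the reflected blocks interleave with the original blocks. I would first try to identify such $D(\lambda)$ directly as a set: the translated copies of $Q$ with step $x$ merge across the gap of length $x$. I expect the result to be $\CNA(x-1,\cdot)$-shaped, hence $\KA(0,x,t')$ for suitable $t'$ by \eqref{eq:cnaka}, or else a set KA-dominated by such a KA. Failing an exact identification, I would count
\[
s=(z+d)(x+u)+(z-d-1)(-u-1)\ \text{(suitably signed)}
\]
against the aperture change $\tau=ux$. I would then verify $8\tau\le s(2N-s+6)$ using $N\ge 6$ and $u\le -1$, so that $\tau<0$ and the check is easy whenever $s\ge1$.

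The endpoint $\lambda=H$ has zero overlap and is handled by Lemma~\ref{lem:remove}: $D(H)$ has $2n$ sensors and aperture $2m+x$. Comparing with $\KA(0,x,t')$ at $N=2n$ should give the bound $U_0(N)$ directly through \eqref{eq:square}. Zero overlap below $H$ is dominated.

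For $0\le\lambda\le M=x-1$, only $d=0,1$ contribute. The offsets are $\lambda-2x+1$ and $\lambda-2x+1+P$, so $q_0=0$ unless $x\mid\lambda+1$, that is $\lambda=x-1$. Moreover $r(\lambda)=\card{[0,x-1]\cap(\lambda+Q)}=1$. Take $t=z-1$ and $\tau=\lambda-(M-x^2)=\lambda+x^2-x+1>0$. If $\lambda<x-1$, then $s=2x+(z-1)(x+1)\ge2x$, and $8\tau\le8x^2\le s(2N-s+6)$ since $N-s\ge4x$. If $\lambda=x-1$, then $q_0=1$ and $q_1=q(x^2)=1$, so the two overlaps are counted precisely, and the same inequality follows from $s\ge x$ as in the third fact of Lemma~\ref{lem:low}. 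The prefix bound \eqref{eq:prefix} gives an alternative safeguard.
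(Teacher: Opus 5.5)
Your approach has a genuine gap. The incremental comparison $L\le U_0(N-s)+\tau\le U_0(N)$ against $\KA(x,0,t)$ is false in this family, and it only appears to work because of a slip in the reference count. For $y=0$ one has $a=x$, so $\card{\KA(x,0,t)}=2x+t(x+1)$, which you can check via $\KA(0,x,t(x+1))$; \eqref{eq:NK} assumes $y\geq1$. Hence for $\lambda>M$ you can have $N-s=N_K(1)=3x+1<4x$, and your one-active-index step breaks. For example, take $x=10$, $z=1$, $\lambda=29$. Then $u=w=t=s=1$, $N=32$ and $\tau=10$, so you would need $80=8\tau\le s(2N-s+6)=69$, which is false.

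The case $\lambda=x-1$ fails similarly. The offset there is $-x$, so $q_0=q(-x)=x$, not $1$. Then $s=x$ and $\tau=x^2$, and your inequality becomes $8x\le 2N-x+6$ with $N=3x+(z-1)(x+1)$. This fails for $z=1$, $x\geq3$; at $x=3$ it reads $72\le63$. The lemma is true at both points. Your chain, however, discards the slack $U_0(N_K)-L_K=2\bigl(x-(N_K+3)/4\bigr)^2$ of the reference KA, and the increment $U_0(N)-U_0(N-s)$ cannot absorb $\tau$ without it.

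The two-active case is not established either. The set identification is only conjectured. In your fallback with $\tau=ux$, the true sensor difference is $s=u\leq-1$, so your check for $s\geq1$ never applies. Your treatment of $\lambda\leq x-2$ does survive the correction, since $s\geq2x$ and $N-s\geq2x$ still give $s(2N-s+6)\geq12x^2>8\tau$.

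The missing idea is the merging you glimpse in the two-active case, used globally. Because $P=x(x+1)$ and $Q$ has step $x$, the blocks fuse into one progression: $T_z=x[0,K-1]$ with $K=z(x+1)$, so $G=[0,x-1]\cup(2x-1+x[0,K-1])$. The middle overlap is then simply $K-\lvert u\rvert$ when $\lambda-2x+1=ux$, and zero otherwise, so no active-index bookkeeping is needed.

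Every positive-overlap shift with $\lambda\geq x-1$ has the form $\lambda=2x-1+ux$ with $-1\leq u\leq K-1$. It gives $N=2x+K+u$ and $L=x(N-2x+3)-2\leq(N+3)^2/8-2$ by AM--GM. In fact $D=\CNA(x-1,K+u+2)=\KA(0,x,K+u)$ by \eqref{eq:cnaka}. For $\lambda\leq x-2$ one has $N=2x+2K-2$ and $L\leq xK+2x-3$, which is bounded directly. The endpoint $H=2M+1+zP$ gives $\KA(x,0,2z)\in\KK$ by \eqref{eq:canonical} and \eqref{eq:zero}.

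This is the paper's proof. It evaluates $L$ exactly at the actual sensor count rather than passing through a reference KA whose distance from the $U_0$ envelope is lost.
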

\begin{proof}
If $x=1$, then $G=[0,2z]$. For $\lambda\leq m+1$ the reflected
union is an interval, hence a KA. The endpoint $H=2m+1$ gives
$\KA(0,m+1,0)$; all remaining shifts have zero overlap and are dominated.
Assume $x\geq2$, and put $K=z(x+1)$. Then
\[
 T_z=x[0,K-1],\quad M=x-1,\quad m=x(K+1)-1,\quad n=x+K.
\]
The cross overlap is $r=\ind{\lambda\leq x-1}$, and the middle overlap
is $K-|u|$ if $\lambda-2x+1=ux$, $|u|\leq K-1$, and is zero otherwise.
After Lemma~\ref{lem:remove}, the possibilities are as follows.
For $\lambda=2x-1+ux$, $0\leq u\leq K-1$,
\[
 N=2x+K+u,\qquad L=x(N-2x+3)-2\leq\frac{(N+3)^2}{8}-2.
\]
At $\lambda=x-1$ the same expression for $L$ holds, with $N=2x+K-1$.
For $\lambda\leq x-2$, $N=2x+2K-2$ and
\begin{align*}
 L&\leq xK+2x-3
   =\frac{x(N-2x+2)}2+2x-3\\
  &\leq\frac{(N+2)^2}{16}+N-3
   \leq\frac{(N+3)^2}{8}-2.
\end{align*}
Here $2x\leq N$, and the last difference is $((N-4)^2+14)/16>0$.
Finally $\lambda=H=2M+1+zP$ gives $\KA(x,0,2z)$,
which belongs to $\KK$ by \eqref{eq:zero}.
\end{proof}

\begin{lemma}\label{lem:zzero}
If $z=0$ and $N\geq6$, every feasible $D(\lambda)$ is a KA,
is KA-dominated, or satisfies $L\leq(N+3)^2/8-2$.
\end{lemma}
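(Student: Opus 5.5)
The plan is a direct case analysis on $\lambda$ using the fact that for $z=0$ the generator is the CNA itself. Then $G=A$, $m=M$, $H=2M+1$ by Lemma~\ref{lem:hole}, and $M-A=A$ by Lemma~\ref{lem:cna}. Hence
\[
 D(\lambda)=A\cup(\lambda+A),\qquad L=M+\lambda,\qquad N=2a-\card{A\cap(\lambda+A)}.
\]

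First dispose of the interval cases. If $x=0$ or $y=0$, then $A$ is an integer interval. Every feasible $\lambda\leq H$ then makes $A\cup(\lambda+A)$ an interval $[0,L]$ with $L\geq1$, and this is $\KA(0,1,L-1)$ by the interval identity of Lemma~\ref{lem:ka}. So assume $x,y\geq1$.

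At the endpoint $\lambda=H=2M+1$ we get $D=A\cup(A+2M+1)=\KA(x,y,0)$ by \eqref{eq:ka} with $T_0=\varnothing$. Any other shift with zero overlap is dominated by Lemma~\ref{lem:remove}. If $M<\lambda<H$, then $\lambda+A$ starts above $\max A$, so the overlap is automatically zero and these shifts are dominated as well. Only $0\leq\lambda\leq M$ remains, and there the task is to prove $L\leq(N+3)^2/8-2$.

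For $0\leq\lambda\leq M$, put $k=\card{A\cap[0,\lambda-1]}$. The prefix bound \eqref{eq:prefix} gives $N\geq a+k=2x+y+k$. The next step is to bound $\lambda$ by the counting function of $A$.
\begin{itemize}
\item If $\lambda\leq x$, then $\lambda=k$.
\item If $\lambda$ lies in the range of $S$, the elements of $A_1$ together with one element of $S$ per period $x+1$ give $\lambda\leq(x+1)(k-x+1)-1$ or a comparable bound.
\item If $\lambda$ lies in $A_2$, then $k$ grows by one per unit step, so the same inequality persists.
\end{itemize}
The aim is a uniform inequality $L=M+\lambda\leq(x+1)Y-2$ with $Y=y+k-x+2$, or a similarly shaped expression. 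This should follow from $M=(x+1)(y+1)-2$ together with $\lambda\leq(x+1)(k-x+1)$.

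Then apply AM--GM in the form
\[
 (x+1)Y=\frac{(2x+2)Y}{2}\leq\frac{(2x+2+Y)^2}{8}=\frac{(x+y+k+4)^2}{8}.
\]
Since $x\geq1$, we have $x+y+k+4\leq2x+y+k+3\leq N+3$, and this yields $L\leq(N+3)^2/8-2$. For small $\lambda$ with $k<x$, the same estimate follows more crudely: $L\leq M+x$ and $N\geq a+\lambda$, and the AM--GM bound applies with $k$ replaced by the same quantity.

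The main obstacle is the precise counting step. One has to check that $\lambda\leq(x+1)(k-x+1)$, or a variant good enough for the AM--GM step, holds uniformly across the three blocks $A_1$, $S$, $A_2$, including the boundary values of $\lambda$ where $k$ jumps. If a clean uniform inequality fails at a block boundary, I would first fall back on using the exact overlap count $N=2a-\card{A\cap(\lambda+A)}$ at those few shifts. The hypothesis $N\geq6$ would be used to absorb the small-parameter exceptions, for example $x=y=1$, where the slack $(N-4)^2$-type terms are tight.
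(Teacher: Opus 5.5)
Two steps fail as written, though you can repair both with tools you already use.

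\textbf{Interval cases.} When $x=0$ or $y=0$, so $A=[0,M]$ and $H=2M+1$, not every feasible shift gives an interval. The set $A\cup(\lambda+A)=[0,M]\cup[\lambda,\lambda+M]$ has a gap for $M+2\leq\lambda\leq2M+1$. For $(x,y)=(0,3)$, the shifts $\lambda=4,5$ give $\{0,1,2,4,5,6\}$ and $\{0,1,2,5,6,7\}$, both with $N=6$. Treat these shifts as in your main case. They have zero overlap, so for $\lambda<H$ Lemma~\ref{lem:remove} gives domination by $D(H)$, and $D(H)=[0,M]\cup[2M+1,3M+1]=\KA(0,M+1,0)\in\KK$.

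\textbf{The shift $\lambda=0$.} Your crude estimate for $k<x$ fails here, because pairing $L\leq M+x$ with only $N\geq a$ is too weak. For $(x,y)=(2,2)$, $D=A=\{0,1,2,5,6,7\}$ has $N=6$, but $M+x=9>65/8=(N+3)^2/8-2$. For $1\leq\lambda\leq x$ the crude estimate does work, since then $N+3\geq2x+y+4$. The fix is to keep $L=M+\lambda$ exactly and use $N\geq2x+y+\lambda$. AM--GM on $2(x+1)$ and $y+1$ then gives $(N+3)^2\geq(2x+y+3)^2+2\lambda(2x+y+3)\geq8(x+1)(y+1)+8\lambda$. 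This is the required bound for all $0\leq\lambda\leq x$. The paper instead identifies $D(0)=A$ as a KA and uses essentially this computation for $1\leq\lambda\leq x$.

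With these repairs your proof is complete, and for $x\leq\lambda\leq M$ it is a mild variant of the paper's. The step you flagged as the main obstacle is fine. As an integer inequality, $\lambda\leq(x+1)(k-x+1)-1$ is equivalent to the paper's count $c\geq x+\lfloor(\lambda-1-x)/(x+1)\rfloor+1$, so you need no separate analysis of $A_2$ or of block boundaries.

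The two routes then differ in how they apply AM--GM. The paper eliminates $y$ first, obtaining $L\leq(x+1)(N-3x+2)-2\leq(N+5)^2/12-2$. This is sharper for large $N$, but it needs $N\geq6$ to be converted into $(N+3)^2/8-2$. You apply AM--GM to $2(x+1)$ and $y+k-x+2$, then use $x\geq1$ and $N\geq2x+y+k$. This wastes $x-1$ but lands on the target directly, so your route never needs $N\geq6$, contrary to your closing remark.
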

\begin{proof}
When $y=0$, the generator $[0,x-1]$ is identically the generator
with parameters $(0,x,0)$, so we may assume $y\geq1$.
At $\lambda=0$, $D=A=\KA(0,x+1,y-2)$ if $y\geq2$;
if $y=1$, $A=[0,2x]$ is an interval and the sensor restriction excludes
the singleton. At $\lambda=2M+1$, $D=\KA(x,y,0)$.
Every $M<\lambda<2M+1$ is dominated by that endpoint.

For $1\leq\lambda\leq M$, let $c=\card{A\cap[0,\lambda-1]}$.
The prefix bound gives $2x+y+c\leq N$ and $L=M+\lambda$.
If $\lambda\leq x$, then $c\geq\lambda$ and
\begin{align*}
 8L&\leq(N-\lambda+3)^2-16+8\lambda\\
   &=(N+3)^2-16-\lambda(2N-2-\lambda)
     \leq(N+3)^2-16.
\end{align*}
We used $(x+1)(N-\lambda+1-2x)\leq(N-\lambda+3)^2/8$.
If $\lambda\geq x+1$, the dense prefix and the sparse progression give
\[
 c\geq x+\left\lfloor\frac{\lambda-1-x}{x+1}\right\rfloor+1
   \geq x+\frac{\lambda-1-x}{x+1}.
\]
Consequently $\lambda\leq(N-3x-y+1)(x+1)$ and
\[
 L\leq(x+1)(N-3x+2)-2
    \leq\frac{(N+5)^2}{12}-2
    \leq\frac{(N+3)^2}{8}-2.
\]
The last comparison holds for $N\geq6$ because its difference is
$(N^2-2N-23)/24>0$.
\end{proof}

\begin{proposition}[Optimizer bound]\label{prop:upper}
Let $N\geq6$, and suppose an optimal feasible S-KMA $D$ with $N$
sensors does not belong to $\KK$. Then
\[
 \max D\leq F(N)\qquad\text{or}\qquad
 \max D\leq\frac{(N+3)^2}{8}-2.
\]
\end{proposition}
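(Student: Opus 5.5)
The proof will assemble the case partition stated at the start of this section. Let $D=D_{x,y,z}(\lambda)$ be an optimal feasible S-KMA with $N\geq6$ sensors and $D\notin\KK$. Since $D$ is optimal, it is not KA-dominated. It also cannot be one of the zero-overlap shifts below $H$ excluded by Lemma~\ref{lem:remove}, because that lemma produces a feasible S-KMA with the same sensor count and strictly larger aperture. We will then go through the partition. If $z=0$, Lemma~\ref{lem:zzero} applies. If $z\geq1$ and $x=0$, Lemma~\ref{lem:xzero} leaves only ``KA'' or ``KA-dominated,'' and both are excluded, so this case cannot occur. If $z\geq1$, $x\geq1$ and $y=0$, Lemma~\ref{lem:yzero} applies. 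In the two cases $z=0$ and $y=0$, the KA and domination alternatives are excluded by the same reasoning, so we are left with $\max D=L\leq(N+3)^2/8-2$.

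The main case is $x,y,z\geq1$. Lemma~\ref{lem:remove} shows that zero overlap below $H$ cannot occur. At $\lambda=H$, identity \eqref{eq:canonical} with $k=z$ gives $D=\KA(x,y,2z)\in\KK$ whenever $H=2M+1+zP=m+M+1$. By Lemma~\ref{lem:hole} this covers every case except $y=1$, $x\geq2$, where $H=m+M+2$; that case falls under the last clause of Lemma~\ref{lem:high} and gives $L\leq\Phi(x,N)$. The case $x+y=1$ cannot arise here because $x,y\geq1$. For positive overlap with $M<\lambda\leq H$, Lemma~\ref{lem:high} gives $D\in\KK$ or $L\leq\Phi(x,N)$. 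For $0\leq\lambda\leq M$, Lemma~\ref{lem:low} gives $D\in\KK$, domination, or $L\leq\Phi(x,N)$. Since $D\notin\KK$ and $D$ is optimal, each branch leaves $L\leq\Phi(x,N)$.

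Finally, $\Phi(x,N)\leq\max_{x\in\mathbb R}\Phi(x,N)=F(N)$ by \eqref{eq:Phi}. The function $\Phi(\cdot,N)$ is a concave quadratic in $x$, since its $x^2$ coefficient is $-23/8$. So $L\leq F(N)$, which is the first alternative. The only delicate point is bookkeeping rather than estimation. We must check that every $(x,y,z,\lambda)$ with $\lambda\leq H$ feasible (Lemma~\ref{lem:feasibility}) lands in exactly one of the branches above. In particular, the boundary $\lambda=M$ belongs to Lemma~\ref{lem:low}, and the degenerate lemmas need $N\geq6$, as in Lemma~\ref{lem:zzero}. We must also check that ``KA-dominated'' really contradicts optimality. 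This holds because each KA in question is feasible by Lemma~\ref{lem:ka}, so it is a competing S-KMA at the same sensor count.
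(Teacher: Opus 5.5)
Your proposal is correct and follows essentially the same route as the paper's proof. It uses the same partition into the degenerate families (Lemmas~\ref{lem:xzero}--\ref{lem:zzero}) and the case $x,y,z\geq1$, where zero overlap below $H$ is removed by Lemma~\ref{lem:remove}, positive overlap is split at $\lambda=M$ between Lemmas~\ref{lem:low} and~\ref{lem:high}, and the endpoint $H=m+M+1=2M+1+zP$ is identified as $\KA(x,y,2z)$ via \eqref{eq:canonical} except when $y=1,\ x\geq2$. The bound $F(N)$ then comes from maximizing $\Phi(\cdot,N)$. Your remarks that KA-domination contradicts optimality because KA sets are feasible S-KMAs (Lemma~\ref{lem:ka}) just make explicit what the paper's short proof leaves implicit.
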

\begin{proof}
Lemma~\ref{lem:remove} removes all zero-overlap shifts below $H$.
For $x,y,z\geq1$, Lemmas~\ref{lem:high} and \ref{lem:low} handle
the positive-overlap shifts. The endpoint $H$ is a KA by
\eqref{eq:canonical}, except for the endpoint covered explicitly by
Lemma~\ref{lem:high}. Maximize $\Phi(x,N)$ using \eqref{eq:Phi}.
The other families are covered by Lemmas~\ref{lem:xzero}--\ref{lem:zzero}.
KA membership and all domination alternatives are excluded by the hypothesis.
\end{proof}

\section{A uniform lower bound and the infinite range}\label{sec:infinite}
The construction in the next lemma is classical. In the notation on
page 179 of Kl\o ve~\cite{Klove1980}, put $k=N-1$, $X=x+1$, $Y=y$
and $l=9$, adjoining zero to his positive basis. His $T(0)$ is our $A$,
his block spacing is $P$, and his basis with zero adjoined is
$\KA(x,y,9)$. His choice $X$ nearest $(k+4)/23$ with parity opposite
to $k$ becomes the choice of $x$ below. In particular, his exact
range formula, divided by two, gives
\[
 \frac12\left(\frac{6k^2+48k-134}{23}-138\theta^2\right)
 =U(N)-3-69\theta^2,
 \qquad \theta=x-\frac{N-20}{23}.
\]
For our range $N\geq43$, the additional hypothesis $Y\geq X\geq1$
in that construction holds: $x\geq0$ and
$y-x-1\geq(8N-298)/46>0$. We include the short direct calculation
so that all constants used in the comparison can be checked here.

\begin{lemma}[Kl\o ve's construction, reparameterized]\label{lem:lower}
For every integer $N\geq43$, $\ell(N)\geq U(N)-72$.
\end{lemma}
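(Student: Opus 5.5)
The plan is to exhibit, for each $N\geq43$, one explicit member of $\KK$ with $N$ sensors and aperture at least $U(N)-72$. I take $t=9$ middle blocks, $\KA(x,y,9)$, as in Kl\o ve's nine-block construction, and choose $x$ near $(N-20)/23$. By \eqref{eq:NK}, the sensor count is $N_K(9)=4x+2y+9(x+1)=13x+2y+9$. I therefore set
\[
 y=\frac{N-13x-9}{2}.
\]
This is an integer exactly when $x\equiv N+1\pmod 2$. Among the integers of that parity, I take $x$ closest to $(N-20)/23$. Two consecutive integers of the same parity differ by $2$, so $\theta=x-(N-20)/23$ satisfies $|\theta|\leq1$.

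Next I check that the parameters are admissible. Since $N\geq43$, we have $(N-20)/23\geq1$, so $x\geq(N-20)/23-1\geq0$. Also $x\leq(N+3)/23$, hence
\[
 y\geq\frac{1}{2}\left(N-\frac{13(N+3)}{23}-9\right)=\frac{10N-246}{46}>0 .
\]
Since $y$ is an integer, $y\geq1$. Thus $\KA(x,y,9)\in\KK$, it is feasible by Lemma~\ref{lem:ka}, and it has exactly $N$ sensors.

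For the aperture I use the identity \eqref{eq:square} with $V=N$, which gives $L_K=U_x(N)-2\bigl(y-(N-3x+3)/4\bigr)^2$. With the chosen $y$, the bracket simplifies to $(N-23x-21)/4$.

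Separately I complete the square in \eqref{eq:Ux} as a function of $x$. Its maximizer is $x^*=(N+3)/23$, and this gives $U_x(N)=U(N)-\tfrac{23}{8}(x-x^*)^2$, which follows from \eqref{eq:U}.

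Substituting $x=(N-20)/23+\theta$, the two deviations become $x-x^*=\theta-1$ and $(N-23x-21)/4=-\tfrac{23}{4}(\theta+\tfrac1{23})$. Hence
\[
 L_K=U(N)-\frac{23}{8}\Bigl[(\theta-1)^2+23\bigl(\theta+\tfrac{1}{23}\bigr)^2\Bigr]
 =U(N)-\frac{23}{8}\Bigl[24\theta^2+\frac{24}{23}\Bigr]=U(N)-69\theta^2-3 .
\]
This agrees with Kl\o ve's range formula quoted before the lemma. With $|\theta|\leq1$ it gives $\ell(N)\geq L_K\geq U(N)-72$.

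The algebra is routine, so the step that needs the most care is the parity and nonnegativity bookkeeping. The parity constraint on $x$ is what costs the factor $|\theta|\leq1$ rather than $|\theta|\leq1/2$. The same constraint must be shown not to force $x<0$ or $y\leq0$, and this is exactly what dictates the threshold $N\geq43$. I would double-check the sign conventions in the completed squares by testing one small value of $N$ numerically against \eqref{eq:LK}.
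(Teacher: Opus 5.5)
Your proof is correct and follows the paper's route. You make the same choice $z=9$, $x\equiv N+1\pmod 2$ with $|x-(N-20)/23|\le1$, $y=(N-9-13x)/2$, the same checks that $x\ge0$ and $y\ge1$, and you reach the same identity $L_K=U(N)-3-69\theta^2$; the only cosmetic difference is that you combine \eqref{eq:square} with $U_x(N)=U(N)-\tfrac{23}{8}\bigl(x-\tfrac{N+3}{23}\bigr)^2$, whereas the paper substitutes directly into \eqref{eq:LK} to get $6Nx+6N-69x^2-120x-56$.
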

\begin{proof}
Set $z=9$ and choose $x\equiv N+1\pmod2$ with
$|x-(N-20)/23|\leq1$. Such an integer is one of the two consecutive
integers starting at $\lfloor(N-20)/23\rfloor$.
Put $y=(N-9-13x)/2$. The parity ensures $y$ is an integer.
Since $N\geq43$, $x\geq0$; moreover
\[
 x\leq\frac{N+3}{23}\leq\frac{N-11}{13},
\]
the second inequality already holding for $N\geq30$. Thus $y\geq1$.
The KA has $N$ sensors. Substitution into \eqref{eq:LK} gives
\begin{align*}
 L_K&=6Nx+6N-69x^2-120x-56\\
    &=U(N)-3-69\left(x-\frac{N-20}{23}\right)^2
      \geq U(N)-72.
\end{align*}
\end{proof}

\begin{proof}[Proof of Theorem~\ref{thm:main} for $N\geq330$]
Suppose an optimizer $D$ is not a KA. By Proposition~\ref{prop:upper},
it satisfies one of the two upper bounds. But for $N\geq330$,
\[
 U(N)-F(N)=\frac{5N+11}{23}>72.
\]
Also
\[
 U(N)-72-\frac{(N+3)^2}{8}
   =\frac{N^2+6N-13607}{184}>0
\]
for all $N\geq114$. In either case
$\max D<U(N)-72\leq\ell(N)\leq L^*(N)=\max D$,
a contradiction. The finite certificate below supplies the remaining sizes.
\end{proof}

\section{The finite range}\label{sec:finite}
The remaining sensor counts are covered by the following
computer-assisted proposition.
\begin{proposition}[Finite-range certificate]\label{prop:finite}
For every integer $2\leq N\leq329$ and every feasible S-KMA $D$ with
$\card D=N$, one has $\max D\leq\ell(N)$. If equality holds, then
$D\in\KK$ as a sensor set.
\end{proposition}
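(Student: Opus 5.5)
This is a finite verification, so the plan is an exact integer enumeration that uses the structural lemmas to keep it bounded and trustworthy. It has three parts.

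\emph{First, tabulate $\ell(N)$.} For each $2\leq N\leq329$ we compute $\ell(N)$ from \eqref{eq:NK}--\eqref{eq:LK}. We range over all triples $(x,y,t)$ with $y\geq1$ and $4x+2y+t(x+1)=N$. Since $x\leq N/4$ and $y\leq N/2$, and $t$ is determined, this is a small list. For each $N$ we record the maximum value together with the list $\mathcal E(N)$ of distinct sets $\KA(x,y,t)$ that attain it, stored as sorted integer lists, since several triples may give one set.

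\emph{Second, enumerate every S-KMA with at most $329$ sensors.} We run over all $(x,y,z)$ with $n=\card G\leq329$, using $G\subseteq D$ and the formula $n=a+z(x+1)$ of Lemma~\ref{lem:generator}. The case $y=0$ is included; when $z=0$ we use the identification of Lemma~\ref{lem:zzero}. By Lemma~\ref{lem:feasibility}, the feasible shifts are exactly $0\leq\lambda\leq H$, with $H$ given explicitly by Lemma~\ref{lem:hole}. For each shift we obtain the sensor count $N=2n-\rho(\lambda)$ from Lemma~\ref{lem:count}. As an independent check, the implementation will also build $D$ explicitly, compute $\card D$ directly, and, for small parameters, compute $D+D$ directly. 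This cross-validates the closed forms for $H$ and $\rho$.

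When $N\leq329$, we compare $L=m+\lambda$ with $\ell(N)$. If $L>\ell(N)$, the certificate fails. If $L=\ell(N)$, we construct $D$ as a set and test membership in $\mathcal E(N)$. Membership in $\mathcal E(N)$ suffices, since any KA with $N$ sensors and aperture $\ell(N)$ is in that list. To keep the work near $O(\text{triples}\times H)$, the shift loop can skip values of $\lambda$ where $\rho$ is trivially zero and $\lambda<H$. By Lemma~\ref{lem:remove}, such shifts are strictly beaten by $\lambda=H$ at the same $N$, so they can neither exceed $\ell(N)$ nor tie it, provided $D(H)$ itself is checked. This keeps the logic sound without enumerating them all.

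\emph{Third, make the computation a certificate.} All arithmetic is in exact integers. The output lists, for each $N$, the value $\ell(N)$, the maximum S-KMA aperture found, and every parameter tuple attaining it with its set identified to a KA. The small cases of Remark~\ref{rem:scope} and Figure~\ref{fig:set-equivalence} serve as regression tests, for example $\ell(7)=10$ with two distinct optimal sets, and $\ell(10)=19$.

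The main obstacle I expect is proving that the enumeration is exhaustive. We must show that every feasible S-KMA with $N\leq329$ is generated exactly once by the loop bounds, including the degenerate parameters ($x=0$, $y=0$, $z=0$, $x+y=1$) and the exceptional hole $H=m+M+2$. We must also show that the equality test is at the level of sets, not parameters. I would address this by stating the loop bounds as consequences of $n\leq N$ and $\lambda\leq H$, and by the direct brute-force recomputation of $D+D$ and $\card D$ on a range of small cases.
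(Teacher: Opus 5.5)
Your plan is correct and has the same skeleton as the paper's certificate. Both enumerate every generator with $a+z(x+1)\leq 329$ and visit every feasible shift. Both reject any $L>\ell(N)$, and both match each equality case, as a complete set, against all maximizing KA sets.

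The genuine difference is where feasibility and cardinality come from. You take $H$ from Lemma~\ref{lem:hole} and $N$ from the overlap formula \eqref{eq:rho}, and you prune zero-overlap shifts below $H$ by Lemma~\ref{lem:remove}. The paper's certificate of record deliberately uses none of these three results. For each generator it computes ordered representation counts $c(s)$, takes $H$ as the first zero of $c$, and checks $G-G=[-m,m]$ directly. It then obtains $N=2\card G-c(L)$ from the bijection between ordered representations of $L$ and $G\cap(L-G)$, at every one of the roughly $2.7\times10^9$ feasible shifts.

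Your route is logically valid, since those lemmas are proved analytically. It is essentially the paper's cheaper formula-based cross-check, which visits about $2.8\times10^7$ candidates. The cost is that the finite range then inherits any error in the case analysis of the first hole and the overlaps. That is exactly the kind of detail the appendix shows was mis-stated in the reference. The direct route depends only on \eqref{eq:generic} and the definitions.

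Two small points. First, completeness only needs every feasible set to arise from at least one visited tuple. ``Exactly once'' is unnecessary, and at the level of sets it is false, as Figure~\ref{fig:set-equivalence} shows. Second, your brute-force recomputations of $D+D$ on small cases are corroboration, not part of the completeness argument. Completeness rests on Lemmas~\ref{lem:hole}, \ref{lem:feasibility} and \ref{lem:count}.
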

We first specify a finite search space containing every relevant array,
then justify the checks that certify the proposition. The execution
record and independent witness checks are reported separately.

\subsection{A complete finite search space}\label{sec:coverage}
Set $B=329$. Since $G\subseteq D$, every relevant generator satisfies
\begin{equation}\label{eq:finitebounds}
 a+z(x+1)\leq B,\qquad
 a=\begin{cases}2x+y,&y\geq1,\\x,&y=0.\end{cases}
\end{equation}
Thus it suffices to visit $x,y\in[0,B]$, excluding $x=y=0$,
retain those pairs with $a\leq B$, and take every
$z\in[0,\lfloor(B-a)/(x+1)\rfloor]$. Each eligible generator is
constructed as an explicit set.

The KA comparison sets are enumerated with $y\geq1$ under the same
generator bound; their cardinalities and apertures are calculated from
the constructed sets. Every KA with at most $B$ sensors is included
because its generator is a subset. Retaining all KA sets with maximum
aperture at each $N$ therefore determines $\ell(N)$ and a complete
list of maximizing KA representatives. Duplicate representations are
permitted; the list is not capped.

For each generator $G$, with $m=\max G$, the program forms the ordered
representation counts
\[
 c(s)=\card{\{(g,h)\in G^2:g+h=s\}},
\]
and directly checks that $G-G=[-m,m]$. Let $H$ be the first zero of
$c$, with $H=2m+1$ if all sums through $2m$ are present.
Identity~\eqref{eq:generic} then shows that the feasible shifts are
exactly the integers $0\leq\lambda\leq H$: below $\lambda$ the sumset
must be supplied by $G+G$, and the difference term supplies
$[\lambda,m+\lambda]$. Reflection supplies the other half.
Consequently the search omits no feasible shift.

\subsection{Checks and their correctness}\label{sec:finitechecks}
At a shift $\lambda$, put $L=m+\lambda$. The map
\[
 \{(g,h)\in G^2:g+h=L\}\longrightarrow G\cap(L-G),
 \qquad (g,h)\longmapsto g,
\]
is a bijection: for each $g$ in the intersection, the unique partner
is $h=L-g\in G$. In particular, the overlap is the \emph{ordered}
representation count $c(L)$, including a diagonal representation once.
Hence the sensor count is exactly
\begin{equation}\label{eq:finitecount}
 N=2\card G-c(L),
\end{equation}
where $c(L)=0$ when $L>2m$.

The following checks are made for every enumerated generator and shift.
\begin{enumerate}
\item Compute $N$ by \eqref{eq:finitecount}. If $N\notin[2,B]$, continue
to the next shift.
\item Reject the run if $L>\ell(N)$.
\item If $L=\ell(N)$, explicitly form and sort $G\cup(L-G)$, remove
duplicates, and check its cardinality and aperture. Compare its entire
coordinate vector with the retained maximizing KA sets at that $N$.
Reject the run if no identical set is found.
\end{enumerate}
In addition, every $N\in[2,B]$ must have at least one checked equality
case. These checks imply the proposition upon successful completion:
the search space covers every feasible array, the second check rules
out a larger aperture, and the third certifies KA membership for every
array attaining equality. Agreement of optimal values alone would not
suffice for the last conclusion.

\subsection{Execution and independent checks}\label{sec:execution}
The direct implementation, \path{direct_full_certificate.c}, uses integer arithmetic,
visits every feasible shift, and does not use the family-specific
first-hole formula \eqref{eq:hole}, overlap formula \eqref{eq:rho}, or
zero-overlap reduction to determine feasibility or cardinality.
The recorded run covers $246\,028$ generators and checks $1\,783$
equality occurrences. Its output gives the aperture and equality count
for every $N=2,\ldots,329$.
Arithmetic bounds, implementation details and full run totals are in
Appendix~\ref{app:computation}.

\begin{proof}[Computer-assisted proof of Proposition~\ref{prop:finite}]
Section~\ref{sec:coverage} gives a finite enumeration containing every
feasible S-KMA with $2\leq N\leq329$, together with every relevant KA
comparison set. Section~\ref{sec:finitechecks} establishes that successful
completion of the checks implies both conclusions of the proposition.
The direct integer run completed all checks successfully, as recorded
above and in the accompanying computational supplement.
\end{proof}

The direct implementation shares its constructors, sorting, baseline
storage and equality matcher with the formula-based C verifier.
Their complete per-$N$ aperture and equality-count vectors agree, but
these implementations are not wholly independent. A separately written
Python bitset verifier, sharing no C implementation, also checked
every potentially optimal sum co-array through $N=44$:
$1\,039\,278$ shifts and $11\,046$ direct $D+D$ checks passed.

To check the common equality matcher independently over the full finite
range, all $1\,783$ successful occurrences were exported as tuples
$(N,x,y,z,\lambda,L)$. A separate Python implementation reconstructs
each generator and its reflected union, checks cardinality and aperture,
and forms its sumset by arbitrary-precision integer bit shifts.
It constructs all $100\,411$ eligible KA triples and compares each
witness with the maximizing KA sets at the same $N$.
All witnesses pass and represent $401$ distinct sensor sets; their
per-$N$ counts agree with the full direct certificate. The output
records a matching KA triple and a set digest for each occurrence.
This independently validates the supplied witnesses, not completeness
of their enumeration, which rests on the direct search and its coverage
argument.

A separately written Python implementation covers all $2\leq N\leq329$.
It enumerates generators by cardinality and reconstructs sensor sets
without the C constructors or matcher. Ordered sum counts come from
squaring $\sum_{g\in G}b^g$: each coefficient is at most $\card G\leq329$,
so base $b=2^{16}$ prevents carry between coefficients. Exact integer
arrays then test every feasible shift. Only after the search are stored
results compared: every aperture, equality count and equality tuple agrees.
This program shares helpers with the Python generator audit, not with
the C programs or the witness checker.

The article-only supplement, \texttt{verification-v3}, contains the exact
verification source snapshot, a SHA-256 manifest, recorded outputs and a
contents index. It is publicly available at
\url{https://github.com/Yan-ll9/kloeve-optimality}.
The \href{\VerificationURL}{fixed-commit ZIP archive} identifies the exact
snapshot used in this article, independently of later repository updates.
Reproduction commands are in Appendix~\ref{app:computation}.
The computation is a finite component of the proof; the analytic
argument is required for all larger sensor counts.

\begin{proof}[Completion of the proof of Theorem~\ref{thm:main}]
Section~\ref{sec:infinite} proves KA membership for every optimizer
when $N\geq330$. Proposition~\ref{prop:finite} gives the same conclusion
for $2\leq N\leq329$. Since KA sets are feasible S-KMAs, their optimum
$\ell(N)$ equals $L^*(N)$ in both ranges.
\end{proof}

\section{Consequence for exact optimization}\label{sec:algorithm}
\begin{corollary}\label{cor:search}
For any $N\geq2$, an optimal contiguous-sum S-KMA and its aperture
can be found with $O(N\log N)$ exact integer objective evaluations.
Keeping all maximizing parameter triples represents every optimal sensor
set, possibly more than once.
\end{corollary}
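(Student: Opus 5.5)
The corollary follows from Theorem~\ref{thm:main}: $L^*(N)=\ell(N)$, and every optimal sensor set lies in $\KK$. It therefore suffices to search the KA parameter triples $(x,y,t)$ with $y\geq1$ and $N_K(t)=N$, where $N_K$ is given by \eqref{eq:NK}. The aperture of each triple is evaluated exactly by the integer formula \eqref{eq:LK}. Since every optimal S-KMA set equals some $\KA(x,y,t)$ as a sensor set, keeping all triples that attain the maximum represents every optimal set. A set may appear more than once, because different triples can give the same set (for instance via \eqref{eq:cnaka}--\eqref{eq:zero}). Each retained triple is a feasible S-KMA by Lemma~\ref{lem:ka}, so the objective value found is exactly $L^*(N)$.

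The counting step is to show that the admissible triples number $O(N\log N)$. For fixed $x\geq0$ and $t\geq0$, the constraint $4x+2y+t(x+1)=N$ determines $y=(N-4x-t(x+1))/2$. This $y$ is admissible only if it is a positive integer, which requires $t(x+1)\leq N-4x-2$. Hence for each $x$ there are at most $N/(x+1)+1$ values of $t$, and $x$ ranges over $0\leq x\leq N/4$. Summing gives $\sum_{x\leq N/4}(N/(x+1)+1)=O(N\log N)$ by the harmonic bound. Each evaluation is a constant number of exact integer operations, with no set construction needed.

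A faster search via the relaxation \eqref{eq:square} is not needed here, since the stated complexity is only $O(N\log N)$. The main point needing care is the representation claim, and it is exactly the set-level conclusion of Theorem~\ref{thm:main}. The theorem asserts membership in $\KK$ of every maximizing \emph{set*}, not merely that the optimal value is attained. Consequently no optimal set is missed by restricting the search to KA triples.
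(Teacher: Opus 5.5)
Your proposal is correct and follows the paper's own proof: reduce to KA triples via Theorem~\ref{thm:main}, enumerate $(x,t)$ with $y=(N-4x-t(x+1))/2$ a positive integer, evaluate \eqref{eq:LK} exactly, and bound the number of pairs by the harmonic sum $\sum_{x}\bigl(1+N/(x+1)\bigr)=O(N\log N)$. One small quibble, which does not affect the corollary since it only says ``possibly'': \eqref{eq:cnaka}--\eqref{eq:zero} do not themselves give two admissible ($y\geq1$) triples for one set, because \eqref{eq:cnaka} produces a $\operatorname{CNA}$ set rather than a second KA triple and \eqref{eq:zero} involves $y=0$; a genuine duplicate is $\KA(x,1,0)=\KA(0,2x+1,0)=[0,2x]\cup[4x+1,6x+1]$ for $x\geq1$.
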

\begin{proof}
By Theorem~\ref{thm:main}, it suffices to search the KA class. Enumerate
\[
 0\leq x\leq\left\lfloor\frac{N-2}{4}\right\rfloor,
 \qquad
 0\leq z\leq\left\lfloor\frac{N-4x-2}{x+1}\right\rfloor,
 \qquad y=\frac{N-4x-z(x+1)}2,
\]
retaining precisely the integral values of $y$. These bounds enforce
$y\geq1$. Evaluate $3M+1+zP$ for each retained triple. Every eligible
KA triple occurs, so every maximizing KA set is represented. If
$q=\lfloor(N-2)/4\rfloor$, the number of tested pairs is at most
\[
 \sum_{x=0}^{q}\left(1+\frac{N}{x+1}\right)
 \leq q+1+N\bigl(1+\log(q+1)\bigr)=O(N\log N).
\]
A maximizing triple constructs a set in $O(N)$ further operations.
The bound counts arithmetic evaluations, not bit operations or the cost
of materializing and deduplicating all optimal sets.
\end{proof}
This is the known KA grid search of \cite[Proposition 3 and
Algorithm 1]{RK2021}, with $y\geq1$ enforced explicitly. The new
consequence is its guarantee over the entire feasible S-KMA class; no
new search algorithm is claimed.

\section{Concluding remarks}
The fixed-sensor-count problem is governed by overlap: changing the
reflection shift affects the aperture and can also change the number
of sensors. Classifying that tradeoff either identifies a KA,
gives strict domination at the same sensor count, or yields an
exclusion bound for a hypothetical optimizer. The linear separation
from a classical construction then reduces the proof to a finite
verification. Checking equality as sets completes the reduction of
S-KMA optimization to the existing KA search.

The threshold 330 is sufficient for this comparison, rather than a
claimed structural transition. The theorem does not by itself count
distinct maximizing sets: the KA search lists all
maximizing triples, but several triples can encode one set and several
sets can maximize the aperture at a given $N$. Understanding how the
number of such sets varies with $N$ is a natural question beyond the
class-membership result.

\appendix
\section{Alignment with the reference definitions}
Our definitions follow the arXiv v2 version of \cite{RK2021}.
We derive the identities used in the proof directly because two displayed
formulas in that version need care at the following points.

The middle-block period in Definition 11 is $P=M+x^2+1$.
The shift increment displayed in equation (19) is $M+x^2$ instead.
Identity \eqref{eq:canonical} uses $P$, derived from the definitions.
For example $(x,y,z)=(1,2,1)$ gives $M=4$, $P=6$ and
$G=\{0,1,3,4,9,10\}$. The correct $k=1$ shift is $\lambda=15$,
giving $\KA(1,2,2)$ of aperture 25. The displayed increment would give
$\lambda=14$, hence aperture 24 and a different set.

Appendix B of~\cite{RK2021} displays a first-hole exception for $y=1,x\geq1$.
At $(x,y,z)=(1,1,1)$, however, $G=\{0,1,2,5,6\}$ and
$G+G=[0,8]\cup\{10,11,12\}$: the first hole is 9.
The exception in Lemma~\ref{lem:hole} therefore uses $x\geq2$;
the $z=0$ case is handled separately.
These observations identify the version and definitions used in this proof.
They do not assert a publisher-issued corrigendum or priority for the observations.

\section{Computational record and reproduction}\label{app:computation}
Table~\ref{tab:certificate} records the full direct run used in
Proposition~\ref{prop:finite}. Counts refer to parameter occurrences
and shifts, not distinct arrays.
\begin{table}[ht]
\centering
\caption{Direct finite certificate through $N=329$.}\label{tab:certificate}
\begin{tabular}{lr}
\toprule
Quantity & Count\\
\midrule
Eligible generators & 246\,028\\
Unordered pairs processed, including diagonals & 7\,044\,825\,631\\
Feasible shifts visited & 2\,680\,100\,974\\
Shifts whose sensor counts lie in $[2,329]$ & 151\,726\,466\\
Exact equality-set comparisons completed & 1\,783\\
Largest generator aperture encountered & 27\,224\\
\bottomrule
\end{tabular}
\end{table}

Coordinates, loop totals and apertures use signed 64-bit integers.
Representation counts are at most $\card G\leq329$, hence fit in the
count type. The implementation visits unordered pairs, adding two for
an off-diagonal pair and one for a diagonal pair to obtain the ordered
counts used in \eqref{eq:finitecount}.
Generator buffers have 330 slots; the reflected union before
deduplication has at most 658 elements and uses 660 slots. Baseline
construction uses 990 slots, exceeding its bound $\card G+\card A\leq658$.
Allocation failure terminates execution, and assertions are enabled.
Independently of the observed maximum aperture, the coarse bounds
$x,y,z\leq329$, $M\leq13776$, $P\leq122018$ give $m<41$ million;
no coordinate calculation approaches the 64-bit limit.

The direct program includes \texttt{finite\_certificate.c} for shared
set construction, sorting, baseline storage and equality matching.
The formula-based program visits $28\,230\,844$ candidates and obtains
the same aperture and equality-count vectors. The exported witnesses
are checked by \path{check_witnesses.py}, which imports neither C
implementation. Polynomial identity checks using SymPy provide
additional exact algebraic corroboration.

The supplement's README gives standalone C11 commands for the direct run.
From its \texttt{verification} directory, the independent witness check
requires ordinary Python~3:
\begin{verbatim}
python audits/check_witnesses.py
\end{verbatim}
Its expected report has \texttt{status: PASS}, $1\,783$ occurrences,
$401$ distinct sets and agreement of all per-$N$ counts. The supplement's
source digest manifest binds the program and result files to the
verification snapshot; its index includes environment details and
commands for regenerating the witnesses.

The independent full-range Python check requires NumPy and uses no
floating-point convolution. SymPy is needed only for the auxiliary
symbolic checks. The supplement pins the tested versions and supplies
a driver that verifies the file manifest before and after execution:
\begin{verbatim}
python -m pip install -r requirements.txt
python run_checks.py --mode hashes
python run_checks.py --mode full --output /tmp/kloeve-full
\end{verbatim}
The \texttt{full} mode runs all finite certificates through $N=329$;
\texttt{quick} instead reruns enumeration only through $N=44$ and is
an installation check, not a replacement for Proposition~\ref{prop:finite}.
The full Python run agrees with the generator, shift and equality counts
in Table~\ref{tab:certificate}. It uses integer convolution, not pair loops.
A full direct C run under
AddressSanitizer and UndefinedBehaviorSanitizer produced identical
output. The supplement
documents implementation dependencies and the scope of each check.

\section*{Declarations}
The authors used Claude (Anthropic) and GPT (OpenAI) to assist with code
development, proof development and logical checks, literature searches,
and drafting portions of the text.
The authors assume responsibility for all content.


\end{document}